\newtheorem{theorem}{Theorem}
\newtheorem{definition}[theorem]{Definition}
\newtheorem{lemma}[theorem]{Lemma}
\newtheorem{remark}[theorem]{Remark}
\newtheorem{proposition}[theorem]{Proposition}
\newcommand{\R}{\mathbb{R}}
\newcommand{\supp}{\mbox{supp}~}
\title[3D MHD energy cascades]{Energy cascades in physical scales of 3D incompressible magnetohydrodynamic turbulence}
\author{Z. Bradshaw and Z. Gruji\'c}
\address{Department of Mathematics\\
University of Virginia\\ Charlottesville, VA 22904}
\date{\today}
\begin{document}
\begin{abstract}The existence of a total energy cascade and the scale-locality of the total energy flux are rigorously established working directly from the 3D MHD equations and under assumptions consistent with physical properties of turbulent plasmas.  Secondary results are included identifying scenarios where inertial effects on specific energies effect cascade-like behavior as well as a scenario in which the inter-field energy transfer is predominantly from the fluid to the magnetic field.
\end{abstract}
\maketitle
\section{Introduction}

The 3D magnetohydrodynamic equations (3D MHD) model the evolution of a coupled system comprised of a magnetic field and an electrically conducting fluid's velocity field.  
Throughout turbulent MHD regimes such as stellar winds and the interstellar medium, observational and numerical evidence indicate that energy is transported from larger to smaller scale structures in a regular fashion (cf. \cite{DB,B05,BhaPo10,DB-ES-01}).  This process, known as the \emph{energy cascade}, is consistent with the picture wherein energy is preferentially distributed on intermittently located and progressively thinning coherent current and vortex structures \cite{DB, BrGr2, Greco09, Greco08, Bruno2001,PoBhGa}.
The present contribution presents a mathematical framework (initially developed to study hydrodynamic turbulence, cf.\cite{DaGr1}) by which we quantify and rigorously affirm the existence of an energy cascade across an inertial range as an \emph{intrinsic feature of the 3D MHD system}.  Before further discussing our work, we briefly discuss the topic of energy cascades in general.

A (direct) cascade, whether it be in a fluid or plasma, is the net inertial transport of an ideally conserved quantity from larger to smaller scales (cf. \cite{Chorin, Frisch,DB}).  Roughly put, a source injects energy and, as the medium transitions to turbulence, large scale coherent structures emerge -- vortex filaments in the hydrodynamic case; current sheets in 3D MHD turbulence -- on which this energy is concentrated.  As turbulence evolves, the energy is transported by inertial effects from macro-scale eddies to progressively smaller scales in a uni-directional fashion.  The process ceases at a scale at which  inertial effects are outweighed by dissipative forces (either viscous or resistive) and, instead of being transported to even smaller scales, energy is lost as heat.
The range of scales over which this cascade persists is referred to as the \emph{inertial range}. In the 3D hydrodynamic energy cascade, the inter-scale transport obeys two fundamental properties: constancy and scale locality of the flux.  The first of these means that, at every scale within the inertial range, the energy flux is constant.  Locality of the flux means that the energy exchange is predominantly between structures of comparable scales.  

The existence of an energy cascade in 3D MHD turbulence is widely accepted in the physics community (see \cite{DB} for an overview and \cite{Kr,Ir} for the classical phenomenologies), but there is considerable disagreement regarding the details.  A contentious issue lies in understanding the anisotropic influence of a strong magnetic field on scaling properties of the energy spectrum (this discussion began in earnest in \cite{GS94,GS95}; in contrast, the classical phenomenologies of Iroshnikov and Kraichnan assumed an isotropic spectral transfer \cite{Kr,Ir}).  In \cite{GS95}, a \emph{critical balance} assumption -- i.e. that there is a single timescale for parallel and perpendicular motion (to the magnetic mean field) in a turbulent eddy -- was introduced allowing for the derivations of a distinct perpendicular energy spectrum and a scaling relationship between the lengths of perpendicular and parallel fluctuations.  Numerical results indicated the picture is more complex than that described in \cite{GS95} and various competing phenomenologies have been developed (cf. \cite{B05,MuGr05,GaPoMa05,BeLa08} for several examples). Lively debate remains as to which is the most effective (cf. \cite{PeMaBoCa12,Be12}).  

Although there is potential, in light of the discussion highlighted above, for a contribution to this debate based on a mathematical analysis of the governing system, there have been few results in this direction.  Indeed, prior to now, neither the existence of an energy cascade nor the scale locality of the flux had been rigorously affirmed (results concerning locality of the flux can be found in \cite{AlEy10}; a rigorous study of a related phenomenon, the concentration of enstrophy, has been carried out in \cite{BrGr2}). The purpose of this paper is to provide such results; in Section \ref{sec:cascades}, we establish the cascade of total energy by studying the orientation of the total energy flux in a suitably statistical manner across an inertial range and, in Section \ref{sec:locality}, affirm scale locality of the energy flux. 
We remark that our conclusions follow directly from the 3D MHD equations under assumptions which are physically reasonable for turbulent regimes wherein the magnetic Prandtl number is not significantly smaller than one (i.e. $\eta \lesssim \nu$).  In particular, it applies to astronomical settings such as the Solar wind and the interstellar medium.  Interestingly, and contrasting the current phenomenological theories cited earlier, no appeal is made to the existence of a strong magnetic guide field. This indicates that the energy cascade is an \emph{intrinsic property of the 3D MHD system} and not solely an artifact of the plasma's environment. 

Due to the coupling between the magnetic field and plasma, each of which is imbued with its own energy, there are a number of transfer mechanisms by which energy can `flow' between scales.  In particular, kinetic energy can remain tied to the velocity field or magnetic energy to the magnetic field, but each energy can also be transferred between the two fields.  A secondary purpose of this work is to identify conditions under which the distinct energies are transported, in a statistical sense, from larger to smaller scales (see Sections \ref{sec:fluid} and \ref{sec:indiecascades}).   Additionally, by considering the stretching effect of the velocity field on the magnetic field lines, we identify a scenario and range of scales in which the dominant inter-field energy exchange is directed from the velocity field to the magnetic field (see Section \ref{sec:scenario}).

Our conclusions are obtained using a dynamic, multi-scale averaging process developed to study features of hydrodynamic turbulence (we recall the specifics of this methodology in Section 2; the initial development of this approach appears in \cite{DaGr1} and has subsequently been used in \cite{BrGr2,DaGr2, DaGr3, DaGr4}). The process acts as a detector of significant \emph{sign-fluctuations} associated with a physical density at a given scale and is used to show that the orientation of a particular flux -- i.e. the energy flux -- is, in a statistically significant sense, from larger to smaller scales. The analysis is carried out entirely in physical space.  This contrasts other approaches which have identified `scale' with the Fourier wavenumber (cf. \cite{FMRT2001}). Interestingly, locality is derived \emph{dynamically} as a direct consequence of the existence of the turbulent cascade in view, featuring comparable upper and lower bounds throughout the inertial range; in contrast, the previous locality results were essentially localized \emph{kinematic} upper bounds on the flux, the corresponding lower bounds being consistent with turbulent properties of the flow \cite{E05, CCFS08}.

\section{$(K_1,K_2)$-covers and ensemble averages}\label{sec:covers}

The main purpose of this section is to describe how {\em ensemble averaging} with respect to {\em $(K_1,K_2)$- covers} of an integral domain $B(0,R_0)$ can be used to establish \emph{essential positivity} of an \emph{a priori} sign-varying density over a range of physical scales associated with the integral domain (cf. \cite{DaGr1}).  The application to turbulence lies in showing certain flux densities are directed into structures of a particular scale -- i.e. the cascade is uni-directional from larger to smaller scales -- as well as the near-constancy of the averaged densities -- i.e. the space-time averages over cover elements are all mutually comparable -- across a range of scales.

The ensemble averages will be taken over collections of spatiotemporal averages of physical densities localized to cover elements of a particular type of covering -- a so called {\em $(K_1,K_2)$-cover} -- where the cover is over the region of turbulent activity.  For simplicity, this region will be taken as a ball of radius $R_0$ centered at the origin and, to reflect the turbulence literature, is henceforth referred to as the {\em integral domain} (also known as the \emph{macro-scale domain}).  The time interval on which we localize is  motivated by the physical theories of turbulence and assumed to satisfy, \[T\geq \frac {R_0^2} \nu.\]
The $(K_1,K_2)$-covers are now defined.

\begin{definition}Let $K_1,K_2\in \mathbb N$ and $0\leq R\leq R_0$.  The cover of the integral domain $B(0,R_0)$ by the $n$ (open) balls, $\{B(x_i,R)\}_{i=1}^n$ is a  {\em $(K_1,K_2)$-cover at scale }$R$ if,
\begin{align*}\bigg( \frac {R_0} R \bigg)^3\leq n \leq K_1 \bigg(\frac {R_0} R\bigg)^3\end{align*}
and, for any $x\in B(0,R_0)$, $x$ is contained in at most $K_2$ cover elements.
\end{definition}

In the hereafter, all covers are understood to be $(K_1,K_2)$-covers at scale $R$.
The positive integers $K_1$ and $K_2$ represent the maximum allowed \emph{global} and \emph{local multiplicities}, respectively.

In order to localize a physical density to a cover element we incorporate certain {\em refined} cut-off functions.  For a cover element centered at $x_i$, let $\phi_i(x,t)=\eta(t)\psi(x)$ where $\eta\in C^\infty(0,T)$ and $\psi\in C_0^\infty (\R^3)$ satisfy,
\begin{align}\label{timecutoff}0\leq \eta\leq 1,\qquad \eta=0~\mbox{on }(0,T/3),\qquad\eta=1~\mbox{ on }(2T/3,T),\qquad\frac {|\partial_t\eta|} {\eta^\delta }\leq \frac {C_0} T,
\end{align}
and,
\begin{align}\label{spacecutoff} 0\leq \psi\leq 1,\qquad\psi=1\mbox{ on }B(x_i,R),\qquad\frac {|\partial_i\psi|} {\psi^{\rho}} \leq \frac {C_0} {R},\qquad\frac {|\partial_i\partial_j \psi|} {\psi^{2\rho-1}}\leq \frac {C_0} {R^2},
\end{align}where $3/4< \delta,\rho <1$.

By $\phi_0$ we denote a refined cut-off function centered at $x=0$ localizing to the ball $B(0,R_0)$ (note that $\phi_0$ is the cut-off function for the integral domain).

Comparisons will be necessary between averaged quantities localized to cover elements at certain scales $R<R_0$ and averaged quantities taken at the scale of the integral domain, $R_0$.  To accommodate this we impose several additional conditions for points $x_i$ lying near the boundary of $B(0,R_0)$.  If $B(x_i,R)\subset B(0,R_0)$ we assume $\psi\leq \psi_0$.  Alternatively, when $B(x_i,R)\not\subset B(0,R_0)$, additional assumptions are in order. To specify these, let $l(x,y)$ denote the collection of points on the line through $x$ and $y$ and define the sets,  
\begin{align*}
S_0&=B(x_i,R)\cap B(0,R_0),
\\S_1&=\big\{x: R_0\leq |x|< 2R_0 ~\mbox{and}~\emptyset\neq \big( l(x,0)\cap  \partial B(x_i,R)\cap B(0,R_0)^c \big)\big\},
\\S_2&=\bigg( B(x_i,2R)\cup \big\{x: R_0\leq |x|< 2R_0 ~\mbox{and}~\emptyset\neq \big( l(x,0)\cap  \partial B(x_i,2R)\cap B(0,R_0)^c \big)\big\}\bigg)\setminus (S_0\cup S_1).
\end{align*}Then, our assumptions are that $\psi$ satisfies (\ref{spacecutoff}), $\psi=1$ on $S_0$, $\psi=\psi_0$ on $S_1$, and $\supp \psi = S_2$.  The above conditions ensure that $\psi\leq \psi_0$ and that $\psi$ can be constructed to have an inwardly oriented gradient field.

The above apparatus is employed to study properties of a physical density at a {\em physical scale} $R$ associated with the integral domain $B(0,R_0)$ in a manner which we now illustrate.  Let $\theta$ be a physical density (e.g. a flux density) and define its localized spatio-temporal average on a cover element at scale $R$ around $x_i$ as \begin{align*}\tilde \Theta_{x_i,R} = \frac 1 T \int_0^T \frac 1 {R^3} \int_{B(x_i,2R)} \theta(x,t)\phi^\delta_{i}(x,t)~dx~dt,\end{align*}
where $0< \delta\leq 1$, and let $\langle \Theta\rangle_R$ denote the ensemble average over localized averages associated with cover elements,
\[\langle \Theta \rangle_R=\frac 1 n \sum_{i=1}^n \tilde \Theta_{x_i,R}.\]
Examining the values obtained by ensemble averaging the averages associated to a variety of covers at a fixed  scale allows us to draw conclusions about the flux density $\theta$ at comparable and greater scales.  For instance, stability (i.e. near constancy) across the set  $\{\langle \Theta\rangle _R\}$ indicates that the sign of $\theta$ is essentially uniform at scales comparable to or greater than $R$.  On the other hand, if the sign were not essentially uniform at scale $R$, particular covers could be arranged to enhance negative and positive regions and thus give a wide range of sign varying values in $\{\langle \Theta\rangle _R\}$.  Our methodology, then, establishes the essential positivity of an {\em a priori} sign varying density $\theta$ at a scale $R$, by showing the positivity and near constancy of all elements of $\{\langle \Theta\rangle _R\}$.

An indispensable observation is that, if $\theta$ is an {\em a priori} non-negative density, then the ensemble averages taken at scales below the integral scale are all comparable to the integral scale average.  We make this notion precise in the following lemma.
\begin{lemma}\label{lemma:ensembleaverages}Let $f(x,t)\in L_{loc}^1((0,T)\times \R^3)$ be non-negative.  Let $\{x_i\}_{i=1}^n$ be centers of elements of a $(K_1,K_2)$-cover of $B(x_0,R_0)$ at scale $R<R_0$.  Setting \[F_0=\frac 1 T \int_0^T \frac 1 {R_0^3} \int f(x,t)\phi_0(x,t) ~dx~dt,\]
and \[F_{x_i,R}=\frac 1 T \int_0^T \frac 1 {R^3} \int f(x,t)\phi_{x_i,R}(x,t) ~dx~dt,\]we have
\begin{align}\label{ineq:interpBetweenScales} \frac 1 {K_1} F_0 &\leq \langle F  \rangle_R \leq K_2 F_0. \end{align}
\end{lemma}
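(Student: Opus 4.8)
\emph{Proof sketch (proposal).} The plan is to reduce \eqref{ineq:interpBetweenScales} to two pointwise comparisons between the localized refined cut-offs $\phi_{x_i,R}$ and the integral-domain cut-off $\phi_0$, after which the lemma follows by interchanging the finite sum with the integral and invoking only the cardinality bounds $(R_0/R)^3\leq n\leq K_1(R_0/R)^3$ and the local multiplicity bound built into the definition of a $(K_1,K_2)$-cover. First I would record the pointwise facts that the cut-off construction of Section~\ref{sec:covers} is engineered to deliver: for each $i$,
\begin{align}\label{pw:a}
\phi_{x_i,R}\leq \phi_0 \quad\text{on }(0,T)\times\R^3,
\end{align}
\begin{align}\label{pw:b}
\phi_0\leq \sum_{i=1}^n \phi_{x_i,R}\quad\text{on }(0,T)\times\R^3,
\end{align}
\begin{align}\label{pw:c}
\sum_{i=1}^n \phi_{x_i,R}\leq K_2\,\phi_0 \quad\text{on }(0,T)\times\R^3 .
\end{align}

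Inequality \eqref{pw:a} is immediate where $B(x_i,R)\subset B(0,R_0)$ (there $\psi\leq\psi_0$ is imposed directly) and, for a boundary element, on $S_0$ (both sides reduce to the common temporal factor, since $\psi=1=\psi_0$ there) and on $S_1$ (where $\psi=\psi_0$ by construction); on the residual decay region $S_2$ it is the asserted bound $\psi\leq\psi_0$ coming from the inwardly oriented construction. For \eqref{pw:b}: on $B(0,R_0)$ the covering property of $\{B(x_i,R)\}_{i=1}^n$ forces at least one summand to equal the common temporal factor, which already dominates $\phi_0$ there; on the annular collar $\supp\phi_0\setminus B(0,R_0)$ (contained in $\{x:R_0\leq|x|<2R_0\}$) each point lies on a ray from the origin through some $B(x_i,R)$ meeting $B(0,R_0)^c$, hence belongs to the corresponding $S_1$, where $\phi_{x_i,R}=\phi_0$. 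Finally \eqref{pw:c} would combine \eqref{pw:a} with the fact that at most $K_2$ of the balls $B(x_i,R)$ — hence a controlled number of the enlarged supports — meet any given point.

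Granting \eqref{pw:b} and \eqref{pw:c}, the conclusion is a one-line computation: since $f\geq0$, all the cut-offs are non-negative with bounded support, and $f\in L^1_{loc}$, Tonelli's theorem gives
\begin{align*}
\sum_{i=1}^n F_{x_i,R}=\frac 1T\int_0^T\frac 1{R^3}\int f(x,t)\Big(\sum_{i=1}^n\phi_{x_i,R}(x,t)\Big)\,dx\,dt .
\end{align*}
By \eqref{pw:c} and $f\geq0$ this is at most $K_2\,\frac 1T\int_0^T\frac 1{R^3}\int f\phi_0\,dx\,dt=K_2\,(R_0/R)^3\,F_0$, and by \eqref{pw:b} and $f\geq0$ it is at least $(R_0/R)^3\,F_0$. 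Dividing by $n$ and using $(R_0/R)^3\leq n\leq K_1(R_0/R)^3$ yields $\frac1{K_1}F_0\leq\langle F\rangle_R\leq K_2F_0$.

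The only non-routine step, and the one I expect to be the main obstacle, is the verification of \eqref{pw:b} and \eqref{pw:c} for cover elements straddling $\partial B(0,R_0)$: the sets $S_0,S_1,S_2$ and the prescriptions $\psi=1$ on $S_0$, $\psi=\psi_0$ on $S_1$, $\supp\psi=S_2$, $\psi\leq\psi_0$, together with the inwardly oriented gradient field, are precisely what is needed so that the boundary-adapted cut-offs still sum to something dominating $\phi_0$ on the collar without the enlarged supports inflating the local multiplicity past $K_2$ (any residual dimensional constant from doubling the radii can be absorbed into $K_2$). Checking that the rays and shells in the definitions of $S_1,S_2$ interact correctly with the covering property of $\{B(x_i,R)\}_{i=1}^n$ is the geometric bookkeeping that carries the real content; the remainder is the trivial Tonelli interchange and the two counting inequalities for $n$.
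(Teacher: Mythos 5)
Your proof is correct and follows essentially the same route as the paper's: both reduce the lemma to the pointwise bounds $\phi_0 \le \sum_{i=1}^n \phi_{x_i,R} \le K_2\,\phi_0$ together with the cardinality bounds $(R_0/R)^3 \le n \le K_1 (R_0/R)^3$, and then integrate against $f\ge 0$. The paper simply asserts these pointwise bounds as consequences of the cut-off construction, whereas you spell out their verification (including the multiplicity issue for the doubled supports), which is, if anything, more careful.
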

\begin{proof}Recalling that $\phi_{x_i,R}\leq \phi_0$ and the definition of $(K_1,K_2)$-covers we have that, \begin{align*}\langle F  \rangle_R &=\frac 1 T \int_0^T \frac 1 {nR^3} \int f(x,t) \sum_{i=1}^n \phi_{x_i,R}(x,t) ~dx~dt
\\&\leq \frac 1 T \int_0^T \frac 1 {R^3} \frac {R^3} {R_0^3} \int f(x,t) K_2 \phi_0(x,t) ~dx~dt = K_2F_0,\end{align*}
and,
\begin{align*}\langle F  \rangle_R &=\frac 1 T \int_0^T \frac 1 {nR^3} \int f(x,t) \sum_{i=1}^n \phi_{x_i,R}(x,t) ~dx~dt
\\&\geq \frac 1 T \int_0^T \frac 1 {R^3} \frac 1 {K_1} \frac {R^3} {R_0^3} \int f(x,t)  \phi_0(x,t) ~dx~dt =\frac 1 {K_1} F_0.
\end{align*}
\end{proof}

For additional discussion of $(K_1,K_2)$-covers and ensemble averages, including some computational illustrations of the process, see \cite{DaGr3}.

\section{3D incompressible MHD equations}\label{sec:context}

Our mathematical setting is that of {\em weak solutions} to the 3D magnetohydrodynamic equations over $\R^3$ (cf. \cite{SeTe} for the essential theory).  Define $\mathcal V=\{f\in  L^2(\R^3):\nabla\cdot f=0\}$ (where the divergence free condition is in the sense of distributions) and let $V$ be the closure of $\mathcal V$ under the norm of the Sobolev space, $(H^1(\R^3))^3$, and, $H$, the closure of $\mathcal V$ under the $L^2$ norm. By a solution to 3D MHD we mean a weak (distributional) solution to the following coupled system (3D MHD):
\begin{align*}u_t-\nu \triangle u +(u\cdot \nabla)u- (b\cdot \nabla)b +\nabla (p+|b|^2/2) &=0,
\\ b_t-\eta \triangle b + (u\cdot \nabla) b - (b\cdot \nabla )u &=0 ,
\\ \nabla\cdot u = \nabla\cdot b&= 0,
\\ u(x,0)=u_0(x)&\in V,
\\ b(x,0)=b_0(x) & \in V,
\end{align*}
where $\eta$ and $\nu$ are the magnetic resistivity and kinematic viscosity respectively and $p(x,t)$ is the fluid pressure.

Our present work utilizes {\em suitable weak solutions} for MHD. These are weak solutions which additionally satisfy the generalized energy inequality (among other things -- see \cite{HeXin1} for a precise definition),
\begin{align}\label{ineq:generalizedEnergy} &\int_0^T\int ( \nu |\nabla u(x,t)|^2+\eta |\nabla b(x,t)|^2)\phi(x,t)~dx~dt
\\\notag &\qquad \leq \quad\frac 1 2\int_0^T\int(|u(x,t)|^2+|b(x,t)|^2)\phi_t(x,t)~dx~dt
\\\notag&\qquad\qquad+ \frac  1 2\int_0^T\int (\nu |u(x,t)|^2+\eta |b(x,t)|^2)\Delta \phi(x,t)~dx~dt
\\\notag &\qquad\qquad+ \frac  1 2\int_0^T\int (|u(x,t)|^2+|b(x,t)|^2+2 p(x,t))(u(x,t)\cdot\nabla \phi(x,t))~dx~dt
\\\notag &\qquad\qquad -\int_0^T \int (u(x,t)\cdot b(x,t))(b(x,t)\cdot\nabla \phi(x,t))~dx~dt,
\end{align} for a.e. $T\in (0,\infty)$ and any non-negative $\phi\in C_0^\infty (\R^3\times [0,\infty))$.

Existence of suitable weak solutions for  MHD is proven in \cite{HeXin1} using an adaptation of the traditional method for NSE found in \cite{CKN-82}.  Our application of these will require the generalized energy inequality mentioned above as well as the following regularity properties (these can also be found in \cite{HeXin1}).

\begin{proposition}For $u_0$, $b_0\in H$ and $u_0\in W^{4/5,5/3}$, suppose $(u,b,p)$ constitutes a suitable weak solution to 3D MHD.  Then, $(u,b,p)$ satisfies,
\begin{align*}u,b\in L^\infty(0,\infty;H),&~ u,b\in L^{3}(0,\infty;L^3(\R^3)),
\\ \nabla u,\nabla b\in L^2(0,\infty;L^2(\R^3)),&~ p\in L^{3/2} (0,\infty;L^{3/2}(\R^3)).
\end{align*}
\end{proposition}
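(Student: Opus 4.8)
The plan is to extract the four bounds from the generalized energy inequality (\ref{ineq:generalizedEnergy}) together with the elliptic problem satisfied by the pressure, following \cite{HeXin1} (which itself adapts \cite{CKN-82}). First I would recover the basic global energy inequality from (\ref{ineq:generalizedEnergy}) by testing against $\phi(x,t)=\zeta(t)\,\chi(x/\lambda)$, where $\zeta$ smoothly approximates $\mathbf 1_{(0,t_0)}$ and $\chi\in C_0^\infty(\R^3)$ is a radial cutoff equal to $1$ near the origin, and letting $\lambda\to\infty$. The $\Delta\phi$ and $\nabla\phi$ contributions are $O(\lambda^{-2})$ and $O(\lambda^{-1})$ and vanish, the pressure/transport term $(|u|^2+|b|^2+2p)(u\cdot\nabla\phi)$ vanishes as well, and --- the decisive algebraic point --- the two MHD coupling terms combine, using $\nabla\cdot b=0$, into a pure divergence, $(b\cdot\nabla)b\cdot u+(b\cdot\nabla)u\cdot b=\nabla\cdot\big((u\cdot b)\,b\big)$, which therefore does not contribute against the constant weight. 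This yields
\[\tfrac12\big(\|u(t)\|_2^2+\|b(t)\|_2^2\big)+\int_0^t\big(\nu\|\nabla u\|_2^2+\eta\|\nabla b\|_2^2\big)\,ds\le\tfrac12\big(\|u_0\|_2^2+\|b_0\|_2^2\big)\]
for a.e.\ $t$, hence $u,b\in L^\infty(0,\infty;H)$ and $\nabla u,\nabla b\in L^2(0,\infty;L^2)$.

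For the remaining two properties I would interpolate and then invoke Calder\'on--Zygmund theory. The three-dimensional Gagliardo--Nirenberg inequality $\|f\|_3^3\lesssim\|f\|_2^{3/2}\|\nabla f\|_2^{3/2}$, applied to $f=u$ and $f=b$ and integrated in time, bounds $\int(\|u\|_3^3+\|b\|_3^3)\,dt$ on any finite interval $(0,T_0)$ by $T_0^{1/4}$ times $\sup_t(\|u\|_2^2+\|b\|_2^2)^{3/4}$ times $\big(\int(\nu\|\nabla u\|_2^2+\eta\|\nabla b\|_2^2)\,dt\big)^{3/4}$ (H\"older in $t$), which is finite by the first step; this gives the $L^3_{t,x}$ bound (the global-in-time statement is where $u_0\in W^{4/5,5/3}$ is used in \cite{HeXin1} --- note $W^{4/5,5/3}(\R^3)\hookrightarrow L^3(\R^3)$ and that $\dot W^{4/5,5/3}$ is scaling critical for $3$D MHD). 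For the pressure, taking the divergence of the momentum equation and using $\nabla\cdot u=\nabla\cdot b=0$ gives $-\Delta\big(p+|b|^2/2\big)=\partial_i\partial_j(u_iu_j-b_ib_j)$ in $\mathcal D'(\R^3)$, so, selecting the solution decaying at spatial infinity, $p=R_iR_j\big(u_iu_j-b_ib_j\big)-|b|^2/2$ with $R_i$ the Riesz transforms. The $L^{3/2}$ boundedness of the second-order Riesz operators then gives $\|p(t)\|_{3/2}\lesssim\|u(t)\|_3^2+\|b(t)\|_3^2$, and integrating the $3/2$ power in time produces $p\in L^{3/2}(0,\infty;L^{3/2})$ from the $L^3_{t,x}$ bound. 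This also indicates why only $u_0$ needs the extra regularity: it enters solely through the $u\cdot\nabla\phi$ transport and the pressure.

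The interpolation and Calder\'on--Zygmund steps are routine; the point demanding care --- and the reason for the hypothesis on $u_0$ --- is making the first two steps rigorous on all of $\R^3\times(0,\infty)$. Passing $\lambda\to\infty$ in (\ref{ineq:generalizedEnergy}) is mildly circular, since controlling $\int p\,(u\cdot\nabla\phi)$ needs an a priori local-in-space pressure bound; this is resolved by first localizing the Riesz representation of $p$. Upgrading the local-in-time $L^3_{t,x}$ estimate (one has in fact $u,b\in L^4(0,T_0;L^3)$ for every $T_0$) to the global one is genuine extra information --- the energy inequality alone does not force $t\mapsto\|u(t)\|_2$ to be time-integrable --- and is obtained in \cite{HeXin1} from the $L^3$-regularity of the data together with the retarded-mollification construction of the suitable weak solution. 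For the present paper, which works on the fixed interval $[0,T]$ with $T\ge R_0^2/\nu$, the local-in-time versions of all four bounds already suffice.
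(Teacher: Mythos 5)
The paper does not actually prove this proposition: it is imported from the cited existence paper of He and Xin \cite{HeXin1} (the text says as much), so there is no in-house argument to measure yours against. Your reconstruction is the standard one and each step is correct: letting the cut-off tend to a constant in the generalized energy inequality (with the two coupling terms collapsing to the divergence $\nabla\cdot((u\cdot b)\,b)$) yields $u,b\in L^\infty(0,\infty;H)$ and $\nabla u,\nabla b\in L^2_{t,x}$; Gagliardo--Nirenberg in the form $\|f\|_3\lesssim\|f\|_2^{1/2}\|\nabla f\|_2^{1/2}$ gives $u,b\in L^4(0,\infty;L^3)$ and hence $L^3L^3$ on finite intervals; and the representation $p=R_iR_j(u_iu_j-b_ib_j)-|b|^2/2$ with Calder\'on--Zygmund transfers the $L^3_{t,x}$ bound to $p\in L^{3/2}_{t,x}$. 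You also correctly isolate the one genuinely nontrivial point: $L^4_t$ does not embed in $L^3_t$ on $(0,\infty)$, so the global-in-time $L^3$ claim is exactly where $u_0\in W^{4/5,5/3}$ and the construction of \cite{HeXin1} must enter, and you rightly note that the present paper, working on a fixed $[0,T]$ with $T\geq R_0^2/\nu$, only ever needs the finite-time versions. Deferring that single step to \cite{HeXin1} is no worse than what the paper itself does for the entire proposition. One remark your own reasoning surfaces: the statement asserts $b\in L^3(0,\infty;L^3)$ while placing the extra hypothesis only on $u_0$; whatever upgrades $L^4_t$ to $L^3_t$ globally for $b$ is not visible from the energy estimates alone, so that is the one place where your sketch (like the proposition as quoted) leans entirely on the reference rather than on an argument you could write out.
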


The fact that suitable weak solutions only satisfy a generalized energy  inequality (as opposed to equality) introduces the possibility that energy is dissipated not only by viscosity or resistivity but also by singularities.  In the case that the weak solution in question is {\em regular}, equality is attained in the generalized energy inequality, \eqref{ineq:generalizedEnergy}, and the potential for loss of flux due to singularities is eliminated.  To streamline discussion we establish cascades for regular solutions and include an illustrative result for the non-regular case only in the context of the cascade for the modified (due to energy loss from possible singularities) total energy flux (cf. Section \ref{sec:cascades}).  The study of the possible energy loss due to singularities is itself an interesting subject but the case of MHD is not sufficiently distinct from that of NSE (which can be found in \cite{DaGr1}) to justify an independent exposition.

%
%
\section{Total energy cascade}\label{sec:cascades}

The total energy flux through the boundary of the ball $B$ over the interval $(0,T)$ is given (cf. \cite{DB}) by,
\[\frac 1 2 \int_0^T \int_{\partial B}(|u|^2+|b|^2+2 p)~\hat n \cdot u~dx~dt -\int_0^T \int_{\partial B}(u\cdot b) ( \hat n \cdot b)~dx~dt,\]where $\hat n$ is the unit normal vector directed inward.  Our analytic results are enabled by substituting an inwardly directed vector field, $\nabla \phi$, for $\hat n$ where $\phi$ is a refined cut-off function for the ball $B(x_0,R)$.  Our localized, space-time averaged total energy flux into the ball centered at the spatial point $x_0$ of radius $R$, over the interval $(0,T)$, is then defined as,
\begin{align*}\label{def:totalEnergyFlux} F^{E}_{x_0,R} &:=\frac 1 2 \int_0^T\int (|u|^2+|b|^2+2 p)(u\cdot\nabla \phi)~dx~dt
-\int_0^T \int (u\cdot b)(b\cdot\nabla \phi)~dx~dt.  \end{align*}

The following remark on the genesis of the last term, the advection of the cross-helicity via the magnetic field, is informative. The transfer of magnetic to kinetic energy is driven by the Lorentz force while the stretching of the magnetic field lines is responsible for the transfer of kinetic energy to magnetic energy.  Since these are complementary, the sum cancels in the non-localized case.  Locally, however, we obtain a flux-type term,
\begin{align}\int_0^T \int  ((b\cdot \nabla )b \cdot \phi u  + (b\cdot\nabla)u \cdot \phi b )~dx~dt= -\int_0^T\int (u\cdot b )(b\cdot \nabla \phi)~dx~dt.
\end{align} The appearance of this term is interesting because it is only non-zero if there is some degree of non-locality in the energy transfer \emph{between the two fields}.

In the following we work in a fixed integral domain, $B(x_0,R_0)$, with associated cut-off $\phi_0=\phi_{x_0,R_0}$.  Certain integral domain quantities will be used to determine lower bounds on the inertial ranges over which our cascades are shown to persist.  These are the integral scale space-time averaged kinetic and magnetic energies which are defined in terms of a technical parameter, $\delta$, as,
\begin{align*}e_0^u=e_0^u(\delta)&=  \frac 1 T \int_0^T \frac 1 {R_0^3} \int \frac 1 2|u |^2\phi_0^\delta  ~dx~dt\quad \mbox{and}\quad e_0^b=e_0^b(\delta)=  \frac 1 T \int_0^T \frac 1 {R_0^3} \int \frac 1 2|b |^2\phi_0^\delta  ~dx~dt,\end{align*}
and the integral scale space-time averaged enstrophies which are given by,
\begin{align*}E_0^u= \nu\frac 1 T \int_0^T \frac 1 {R_0^3}\int |\nabla u|^2\phi_0~dx~dt\quad \mbox{and}\quad E_0^b= \eta\frac 1 T \int_0^T \frac 1 {R_0^3}\int |\nabla b|^2\phi_0~dx~dt.
\end{align*}
The combined kinetic and magnetic energies or enstrophies will be identified by omitting the superscript (i.e. $e_0:=e_0^u+e_0^b$). Note that, because $e_0$ is decreasing with $\delta$, we will take liberties suppressing the dependence of $e_0$ on $\delta$ with the understanding that the indicated quantity is that associated with the smallest appropriate value for $\delta$.

We will establish that the cascade persists over a range of scales bound above by the integral scale and below by a modified Taylor micro-scale.  The Taylor micro-scale is (cf. \cite{DB}),
\[\tau = \bigg(\frac {\nu e_0} {E_0}\bigg)^{1/2}.\]
The modification will depend in part on the {\em magnetic Prandtl number}, denoted by $Pr$, a non-dimensional number given by the ratio of kinematic viscosity to magnetic resistivity, i.e.,
\[Pr=\frac \nu \eta.\]Essentially, its role in the modification is to incorporate information about the magnetic resistivity which is absent from our prescribed time scale (recall $T>R_0^2/\nu$) and the Taylor micro-scale.

\begin{theorem}\label{thrm:totalCascade}Assume $u$ and $b$ are suitable weak solutions to 3D MHD with sufficient regularity that equality holds in \eqref{ineq:generalizedEnergy}. Let $\{x_i\}_{i=1}^n\subset B(x_0,R_0)$ be the centers of a $(K_1,K_2)$-cover at scale $R$.  For a scale and cover independent positive parameter, \[\beta:=\bigg( \frac {1} {2CK_1K_2(1+Pr^{-1})}\bigg)^{1/2},\] where $C$ is a constant determined by structural properties of 3D MHD and our cut-off functions, if
$\tau/ \beta< R_0$
then,
 \[\frac {1} {2K_1} E_0 \leq \big\langle   F_{}^E \big\rangle_R \leq  2K_2 E_0,\]
 provided $R$ is contained in the interval
$[\tau /\beta, R_0]$.
\end{theorem}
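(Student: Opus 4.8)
The plan is to exploit the fact that the localized total energy flux is, by the generalized energy \emph{equality} \eqref{ineq:generalizedEnergy}, essentially a localized enstrophy. Indeed, $F^{E}_{x_i,R}$ is precisely the sum of the last two terms on the right of \eqref{ineq:generalizedEnergy} with the test function taken to be the refined cut-off $\phi=\phi_i:=\phi_{x_i,R}$ of a cover element at scale $R$; since equality holds, moving the remaining terms across gives, for each $i$,
\begin{align*}
F^{E}_{x_i,R}&=\int_0^T\!\!\int(\nu|\nabla u|^2+\eta|\nabla b|^2)\phi_i\,dx\,dt
-\tfrac12\int_0^T\!\!\int(|u|^2+|b|^2)\partial_t\phi_i\,dx\,dt
\\&\qquad-\tfrac12\int_0^T\!\!\int(\nu|u|^2+\eta|b|^2)\Delta\phi_i\,dx\,dt.
\end{align*}
So the pressure and cross-helicity contributions never have to be estimated in isolation, and $F^{E}_{x_i,R}$ presents itself as a manifestly nonnegative enstrophy term minus two terms that will turn out to be lower order in the scale $R$. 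Normalizing by $R^3T$, summing over the cover and dividing by $n$, this reads $\langle F^{E}\rangle_R=\langle\mathcal E\rangle_R-\mathrm{Err}_1-\mathrm{Err}_2$, where $\langle\mathcal E\rangle_R$ is the ensemble average of the localized total enstrophies and $\mathrm{Err}_1,\mathrm{Err}_2$ are the ensemble averages of the normalized time-derivative and diffusion error terms.

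I would first dispose of the main term: since $f:=\nu|\nabla u|^2+\eta|\nabla b|^2\ge 0$ and $\phi_i\le\phi_0$, Lemma~\ref{lemma:ensembleaverages} applies verbatim with this $f$ and $F_0=E_0$, yielding $\tfrac1{K_1}E_0\le\langle\mathcal E\rangle_R\le K_2E_0$. For the two error terms I would invoke the refined cut-off bounds \eqref{timecutoff}--\eqref{spacecutoff}: writing $\phi_i=\eta(t)\psi(x)$, those bounds give $|\partial_t\phi_i|\le(C_0/T)\phi_i^\delta$ and, up to an absolute constant, $|\Delta\phi_i|\le(C_0/R^2)\phi_i^\delta$ -- the latter after arranging the cut-off exponents so that $2\rho-1\ge\delta$ (or, equivalently, absorbing the mismatch into the $\delta$-dependence of $e_0$ as the monotonicity remark preceding the theorem permits). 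Since $\phi_i^\delta\le\phi_0^\delta$ and $\sum_i\phi_i^\delta\le K_2$, a second application of (the proof of) Lemma~\ref{lemma:ensembleaverages}, now with $\tfrac12|u|^2$ and $\tfrac12|b|^2$ weighted by $\phi_i^\delta$ and the triangle inequality applied to the signed errors, gives $\mathrm{Err}_1\lesssim(C_0K_2/T)\,e_0$ and $\mathrm{Err}_2\lesssim(C_0K_2/R^2)(\nu e_0^u+\eta e_0^b)$.

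It then remains to make both errors at most $\tfrac1{4K_1}E_0$ for $R\in[\tau/\beta,R_0]$. Here I would feed in the three structural inputs: the hypothesis $T\ge R_0^2/\nu$; the identity $\nu e_0=\tau^2E_0$, which is the definition of $\tau$; and $\eta=Pr^{-1}\nu$, which gives $\nu e_0^u+\eta e_0^b\le(1+Pr^{-1})\nu e_0=(1+Pr^{-1})\tau^2E_0$. These turn the bounds into $\mathrm{Err}_1\lesssim C_0K_2(\tau/R_0)^2E_0\le C_0K_2\beta^2E_0$ (using $R_0>\tau/\beta$) and $\mathrm{Err}_2\lesssim C_0K_2(1+Pr^{-1})(\tau/R)^2E_0\le C_0K_2(1+Pr^{-1})\beta^2E_0$ (using $R\ge\tau/\beta$). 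By the definition $\beta^2=\bigl(2CK_1K_2(1+Pr^{-1})\bigr)^{-1}$, taking $C$ large enough relative to $C_0$ and the absolute constants absorbed above makes each of these at most $\tfrac1{4K_1}E_0$, hence $\mathrm{Err}_1+\mathrm{Err}_2\le\tfrac1{2K_1}E_0$. Combining with the two-sided bound on $\langle\mathcal E\rangle_R$ yields $\langle F^{E}\rangle_R\ge\tfrac1{K_1}E_0-\tfrac1{2K_1}E_0=\tfrac1{2K_1}E_0$ and $\langle F^{E}\rangle_R\le K_2E_0+\tfrac1{2K_1}E_0\le 2K_2E_0$ (using $K_1,K_2\ge1$), for every $R$ in the asserted interval.

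I expect the only genuinely delicate point to be the exponent bookkeeping: checking that each error integrand can be dominated by a fixed power $\phi_i^\delta$, so that it feeds into Lemma~\ref{lemma:ensembleaverages} and the integral-scale energies $e_0^u,e_0^b$, and tracking which value of $\delta$ each estimate forces, so that the single symbol $e_0$ appearing in $\tau$ -- and hence in the hypothesis $\tau/\beta<R_0$ -- legitimately dominates all of them. The Prandtl factor $1+Pr^{-1}$ in $\beta$ is the one MHD-specific wrinkle: it is exactly the cost of controlling the magnetic energy density -- which is weighted by the resistivity $\eta$ in the diffusion error term -- through a time scale ($T\ge R_0^2/\nu$) and a Taylor micro-scale ($\tau$) that only involve the viscosity $\nu$.
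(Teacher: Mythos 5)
Your proposal is correct and follows essentially the same route as the paper: it rewrites $F^E_{x_i,R}$ via the energy equality as a localized enstrophy minus the $\partial_t\phi$ and $\Delta\phi$ error terms, controls both the main term and the errors through Lemma \ref{lemma:ensembleaverages}, and then uses $T\ge R_0^2/\nu$, $\tau^2=\nu e_0/E_0$, and $\eta=Pr^{-1}\nu$ with the stated $\beta$ to absorb the errors into $\tfrac{1}{2K_1}E_0$. The only cosmetic differences are that the paper merges the two error terms into a single $R^{-2}$-weighted term (using $1/T\le\nu/R^2$ rather than $\nu/R_0^2$) before choosing $\beta$, and your exponent bookkeeping with $\phi_i^\delta$ matches the paper's convention of taking $e_0$ at the smallest appropriate $\delta$.
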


\begin{proof}Assuming the premises above and in virtue of \eqref{ineq:generalizedEnergy}, we have for any cover element that, \begin{align*}F_{x_i,R}^E&\geq \int_0^T\int (\nu |\nabla u |^2+\eta |\nabla b |^2)\phi_{x_i,R} ~dx~dt
\\&\qquad - \bigg| \frac  1 2\int_0^T\int(|u|^2+|b|^2)\partial_t\phi_{x_i,R}~dx~dt  + \frac  1 2\int_0^T\int(\nu|u|^2+\eta |b|^2)\Delta \phi_{x_i,R} ~dx~dt \bigg|.
\end{align*}
Recalling the property of our cut-off function, \[|\partial_t \phi_{x_i,R}|\leq c_0\frac {\phi_{x_i,R}^\rho} {T}, \]as well as the fact, \[\frac 1 T \leq \frac \nu {R^2}=Pr \frac \eta {R^2},\]we conclude that,
\begin{align*}\frac 1 2 \int_0^T\int(|u|^2+|b|^2)\partial_t\phi_{x_i,R}~dx~dt  &\leq c_0 \frac {\nu} {R^2} \int_0^T \int |u|^2\phi_{x_i,R}^\rho~dx~dt + {c_0} {Pr} \frac {\eta} {R^2}\int_0^T\int |b|^2\phi_{x_i,R}^\rho ~dx~dt.
\end{align*}
Our cut-off functions also satisfy, \[|\Delta \phi_{x_i,R}|\leq c_0 \frac {\phi^{2\rho-1}} {R^2},\]and, consequently,
\begin{align*}\frac  1 2\int_0^T\int(\nu|u|^2+\eta |b|^2)\Delta \phi_{x_i,R} ~dx~dt&\leq c_0 \frac {\nu} {R^2} \int_0^T \int |u|^2\phi_{x_i,R}^{2\rho-1}~dx~dt + {c_0}   \frac {\eta} {R^2}\int_0^T\int |b|^2\phi_{x_i,R}^{2\rho-1} ~dx~dt.
\end{align*}
Noting that $\rho>2\rho-1$ we have, upon combining the above estimates, that,
\begin{align*} F_{x_i,R}^E\geq  \int_0^T\int (\nu |\nabla u |^2+\eta |\nabla b |^2)\phi_{x_i,R} ~dx~dt -\frac {c_0} { R^2}\int_0^T\int (\nu |u|^2+\eta (1+Pr) |b|^2)\phi_{x_i,R}^{2\rho-1}~dx~dt.
\end{align*}
Using Lemma \ref{lemma:ensembleaverages}, we observe that,
 \[ K_2 E_0\geq \bigg\langle \frac 1 T \int_0^T\frac 1 {R^3}\int (\nu |\nabla u |^2+\eta |\nabla b |^2)\phi_{x_i,R} ~dx~dt \bigg\rangle_R \geq \frac {1} {K_1} E_0,\]
and,
\begin{align*}\bigg\langle \frac 1 {T} \int_0^T \frac 1 {R^3}\int \frac 1 2 (\nu|u|^2+\eta(1+Pr)|b|^2)\phi_{x_i,R}^{2\rho-1}~dx~dt\bigg\rangle_R&\leq \nu K_2e_0^u+\eta (1+Pr) K_2e_0^b
\\&\leq \nu K_2 \big(1+Pr^{-1})e_0.
\end{align*}

We can thus interpolate the ensemble average between integral scale quantities as,
\begin{align*} \frac {1} {K_1} E_0 - \nu \frac {c_0 K_2 (1+Pr^{-1})} {R^2}  {e_0}\leq \big\langle F_{}^E \big\rangle_R \leq {K_2} E_0 + \nu \frac {c_0 K_2 (1+Pr^{-1})} {R^2}  {e_0}.
\end{align*}
It is worth remarking that the upper bound follows in virtue of the assumed regularity (i.e. equality in \eqref{ineq:generalizedEnergy}) and is the only place where this assumption is used.  In particular, the lower bound holds for non-regular solutions.

Continuing, we now specify a value for $\beta$, the modification to the inertial range, to be, \[\beta = \bigg(\frac {1}{2c_0 K_1K_2(1+Pr^{-1})}\bigg)^{1/2}.\] 
Because $R$ lies in the inertial range -- i.e. $\tau/\beta\leq R \leq R_0$ -- we have, \[\nu \frac {c_0 K_2 (1+Pr^{-1})} {R^2}  {e_0} \leq \frac 1 {2K_1}E_0.\]
The final lower bound for the ensemble average is thus,
\[ \big\langle F_{}^E \big\rangle_R \geq  \frac {1} {2K_1} E_0.\]
The upper bound follows trivially with our definition of $\beta$ and we conclude that, \begin{align*}\frac {1} {2K_1} E_0 \leq \big\langle   F_{}^E \big\rangle_R \leq  2K_2 E_0.\end{align*}
\end{proof}

\begin{remark}
The condition triggering the cascade is essentially a requirement that the \emph{gradients} of the velocity and the magnetic fields are large (averaged, over the integral domain) with respect to the fields themselves; this will hold in the regions of high \emph{spatial complexity} of the flow (the correction parameter $\beta$ depends on certain \emph{a priori} bounded quantities; however, none of these involve gradients).
\end{remark}

\begin{remark}The dependence of the length of the inertial range on $Pr$ has consequences for when the above result is most physically relevant.  The correction to the Taylor micro-scale is minimized when $Pr^{-1}\lesssim 1$, that is, when $\eta \lesssim \nu$.  This is exactly the scenario for turbulence of plasmas in astronomical settings such as the solar wind and the interstellar medium.
\end{remark}

We briefly discuss how to obtain a result for possibly non-regular suitable weak solutions.  As noted in the proof, the assumption of regularity was only used in establishing the upper bound on the ensemble average. This is the issue we now address.  The physical cause of a strict inequality in the generalized energy inequality is interpreted as the loss of energy due to possible singularities.  This will be denoted by $F^\infty_{\phi}$ (or $F^\infty_{x_i,R}$ if $\phi$ is the cut-off for a $(K_1,K_2)$-cover element) and is defined as the value that fills in the inequality \eqref{ineq:generalizedEnergy},
 \begin{align}\label{eq:generalizedEnergySingular} &\int_0^T\int ( \nu |\nabla u|^2+\eta |\nabla b|^2)\phi~dx~dt
 \\\notag&= \frac  1 2\int_0^T\int(|u|^2+|b|^2)\phi_t~dx~dt  + \frac  1 2\int_0^T\int(\nu|u|^2+\eta |b|^2)\Delta \phi ~dx~dt
\\\notag &\qquad+\frac  1 2\int_0^T\int (|u|^2+2|b|^2+2 p)(u\cdot\nabla \phi)~dx~dt
 -\int_0^T \int (u\cdot b)(b\cdot\nabla \phi)~dx~dt-F^\infty_{\phi}.
\end{align}
To account for the possible strictness of the above inequality we replace the fluxes considered previously with fluxes modified to include the possible loss of energy due to singularities.  For example, in the case of the total energy flux, we establish an interpolative bound on ensemble averages corresponding to the localized {\em modified total energy fluxes}, \[F_\phi^{E,\infty}=F_\phi^E-F_\phi^\infty.\]
Positivity and near-constancy results are then given in terms of the modified flux specified above.  The {\em cascade of total energy modified due to possible non-regularity} is then given by the following theorem, the proof of which is identical modulo a substitution of $F_\phi^{E,\infty}$ for $F_\phi^E$ to the proof of \eqref{thrm:totalCascade}.

\begin{theorem}\label{thrm:totalCascadeNonReg}Assume $u$ and $b$ are suitable weak solutions to 3D MHD.  Let $\{x_i\}_{i=1}^n\subset B(x_0,R_0)$ be the centers of a $(K_1,K_2)$-cover at scale $R$ where the cut-off functions are defined with $T\geq R^2/\nu$.  For a scale- and cover-independent, positive parameter, \[\beta:=\bigg( \frac {1} {2CK_1K_2(1+Pr^{-1})}\bigg)^{1/2},\] where $C$ is a constant determined by structural properties of 3D MHD and our cut-off functions, if
$\tau< \beta R_0,$
then,
 \[\frac {1} {2K_1} E_0 \leq \big\langle  F_{}^{E,\infty} \big\rangle_R \leq  2K_2 E_0,\]
 provided $R$ is contained in the interval
$[\tau /\beta, R_0]$.
\end{theorem}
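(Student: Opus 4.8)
The plan is to mirror the proof of Theorem \ref{thrm:totalCascade} verbatim, substituting the modified flux $F_\phi^{E,\infty} = F_\phi^E - F_\phi^\infty$ for $F_\phi^E$ throughout, and exploiting the fact that the singular dissipation $F_\phi^\infty$ is nonnegative. The starting point is the defining identity \eqref{eq:generalizedEnergySingular}: rearranging it shows that, for each cover element,
\[
F_{x_i,R}^{E,\infty} = \int_0^T\!\!\int (\nu|\nabla u|^2 + \eta|\nabla b|^2)\phi_{x_i,R}\,dx\,dt - \frac12\int_0^T\!\!\int(|u|^2+|b|^2)\partial_t\phi_{x_i,R}\,dx\,dt - \frac12\int_0^T\!\!\int(\nu|u|^2+\eta|b|^2)\Delta\phi_{x_i,R}\,dx\,dt,
\]
which is now an \emph{equality} with no leftover singular term, precisely because $F^\infty$ has been absorbed into the definition of the modified flux. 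Thus the regularity hypothesis (equality in \eqref{ineq:generalizedEnergy}) is no longer needed: the identity above holds for arbitrary suitable weak solutions.

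From this identity the argument proceeds exactly as before. First I would bound the two error terms using the cut-off estimates $|\partial_t\phi_{x_i,R}|\leq c_0\phi_{x_i,R}^\rho/T$ and $|\Delta\phi_{x_i,R}|\leq c_0\phi_{x_i,R}^{2\rho-1}/R^2$, together with the inequality $1/T \leq \nu/R^2 = Pr\,\eta/R^2$ — note this is where the hypothesis $T\geq R^2/\nu$ on the cover-element cut-offs is used, so one must keep that assumption in the statement (as is done). Since $\rho > 2\rho - 1$, collecting terms gives the two-sided pointwise bound
\[
\left| F_{x_i,R}^{E,\infty} - \int_0^T\!\!\int (\nu|\nabla u|^2 + \eta|\nabla b|^2)\phi_{x_i,R}\,dx\,dt \right| \leq \frac{c_0}{R^2}\int_0^T\!\!\int (\nu|u|^2 + \eta(1+Pr)|b|^2)\phi_{x_i,R}^{2\rho-1}\,dx\,dt.
\]
Applying Lemma \ref{lemma:ensembleaverages} to the enstrophy density (bounding its ensemble average between $\tfrac1{K_1}E_0$ and $K_2 E_0$) and to the energy density (bounding its ensemble average above by $\nu K_2(1+Pr^{-1})e_0$) yields the interpolative inequality
\[
\frac1{K_1}E_0 - \nu\frac{c_0 K_2(1+Pr^{-1})}{R^2}e_0 \leq \big\langle F^{E,\infty}\big\rangle_R \leq K_2 E_0 + \nu\frac{c_0 K_2(1+Pr^{-1})}{R^2}e_0.
\]
Choosing $\beta = \big(2c_0 K_1 K_2(1+Pr^{-1})\big)^{-1/2}$ and using $R\geq \tau/\beta$ forces the error term to be at most $\tfrac1{2K_1}E_0$, giving the lower bound $\langle F^{E,\infty}\rangle_R \geq \tfrac1{2K_1}E_0$; the upper bound $\langle F^{E,\infty}\rangle_R \leq 2K_2 E_0$ follows even more directly since $\tfrac1{2K_1}E_0 \leq K_2 E_0$. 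The condition $\tau < \beta R_0$ is exactly what makes the interval $[\tau/\beta, R_0]$ nonempty.

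There is really no serious obstacle here — the content is entirely in the observation that $F^\infty_\phi \geq 0$ so that subtracting it produces an honest energy \emph{equality} for the modified flux, rendering the regularity assumption superfluous. The one point requiring a little care is bookkeeping: the generalized energy inequality \eqref{ineq:generalizedEnergy} was stated with test functions supported in $\R^3\times[0,\infty)$ and with a fixed time horizon $T$, whereas the cover-element cut-offs $\phi_{x_i,R}$ now carry the constraint $T\geq R^2/\nu$ rather than $T\geq R_0^2/\nu$; one should check that the cut-off construction of Section \ref{sec:covers} still goes through (it does, since only the ratios $|\partial_t\eta|/\eta^\delta \leq C_0/T$ etc. are used, with $T$ the relevant horizon). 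With that noted, the proof is complete by the cited identity-with-$F^\infty$ and the displayed chain of estimates.
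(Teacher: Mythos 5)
Your proposal is correct and follows exactly the route the paper takes: the paper's proof of Theorem \ref{thrm:totalCascadeNonReg} is literally stated as ``identical modulo a substitution of $F_\phi^{E,\infty}$ for $F_\phi^E$'' in the proof of Theorem \ref{thrm:totalCascade}, and your write-up correctly spells out that substitution, noting that the defining identity \eqref{eq:generalizedEnergySingular} turns the one-sided inequality into an equality so that both bounds hold without the regularity hypothesis. (The only cosmetic quibble: the nonnegativity of $F^\infty_\phi$ is not actually needed once the modified flux is defined --- the equality holds by definition --- but your computation does not rely on it in any essential way.)
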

%
%

\section{Cascade-like dynamics of the total fluid energy}\label{sec:fluid}

Our attention is now turned to establishing conditions under which distinct, inertially transported quantities exhibit cascade-like dynamics.  Note that our terminology is modified to reflect the physics literature wherein the term `cascade' typically refers to an ideally conserved quantity.  We first consider the inertially driven concentration of total fluid energy.  The localized flux quantity of interest is,
\[ \frac 1 2 \int_0^T \int (|u|^2  + 2p)(u\cdot \nabla \phi )~dx~dt =-\int_0^T \int ((u\cdot \nabla)u+p)\cdot (\phi ~u)~dx~dt.\]

\begin{remark}\label{meep}
The total fluid energy flux consists of the kinetic energy flux and the pressure flux,
the local and the non-local parts, respectively. However, the cascade-like results obtained in this section, 
paired with the scale-locality of the flux presented in
Section 7, indicates that the dominant component (in the average) is the local one, i.e.,
the kinetic energy flux. Alternatively, one can study the kinetic energy cascade on its own, and 
try to interpolate the ensemble-averaged pressure flux between suitable integral-scale
quantities.  See Section \ref{sec:indiecascades}.
\end{remark}

Results are ultimately intended for the dimensional form of the equations (as was the case for the cascade of total energy) but it will be convenient to carry out estimates in a \emph{dimensionless} setting.  Before doing so we verify that a positive result in one context implies an analogous result in the other. Beginning with a dimensional problem where the integral domain has radius $R_0$ (the characteristic length scale) and $\nu$ is the kinematic viscosity (these will be our fundamental dimensions; they also determine the characteristic time scale $T= {R_0^2} \nu^{-1}$), non-dimensionalization is achieved using the dimensionless variables and functions,\[x_i^*=\frac {x_i}{R_0},~t^*=\frac t T,~u^*(x,t)=\frac T {R_0} u(x,t),~\mbox{and}~b^*(x,t)=\frac T {R_0} b(x,t).\]
We will establish the existence of a dimensionless cascade-like behavior over an inertial range determined by the relationship,
 \[\bigg( \frac {R_0} R\bigg)^\delta\frac {e_0^*}{E_0^*}<\beta,\]
where,
\begin{align*}e_0^*= \int_0^1\int \frac 1 2 (|u^*|^2+|b^*|^2)\phi_0~dx^*dt^* ~\mbox{and}~E_0^*=\int_0^1\int(|\nabla_* u^*|^2+|\nabla_* b^*|^2)\phi_0 ~dx^*dt^*,
\end{align*}and $\beta$ is a dimensionless constant (here $\nabla_*$ indicates differentiation with respect to the dimensionless variable).
This has consequences for the dimensional setting in virtue of the equality,\[\frac {e_0} {E_0}=R_0^2\frac {e_0^*} {E_0^*},\]which follows from a change of variable and the chain rule.   Similar computations verify the following relationships, \begin{align*}e_0^u=\frac {R_0^2} {T^2} ~e_0^{u^*},~e_0^{b}&= \frac {R_0^2} {T^2} e_0^{b^*},~
E_0^{u}= \frac {\nu} {T^2}E_0^{u^*},~\mbox{and}~
E_0^{b}=\frac {\eta} {T^2} E_0^{b^*},
\end{align*}where the dimensionless quantities $e_0^{u^*}$, $e_0^{b^*}$, $E_0^{u^*}$, and $E_0^{b^*}$ are defined in analogy with their dimensional counterparts.

For the flux presently of interest, total (fluid) energy, setting,
\begin{align*}F_{x_i,R}&=\frac 1 {TR^3}\int_0^T\int ((u\cdot \nabla)u +\nabla p)\cdot (\phi_{x_i,R} u)~dx~dt,\end{align*}and,\begin{align*}
F_{x_i,R}^*&=\frac {R_0^3} {R^3}\int_0^1\int ((u^*(x^*,t^*)\cdot \nabla_*)u^*(x^*,t^*)+\nabla_* p^*(x^*,t^*))\cdot \phi_{x_i,R}(x^*,t^*)u^*(x^*,t^*)~dx^*~dt^*,
\end{align*}we see that, \[F_{x_i,R}=\frac {\nu } {T^2} F_{x_i,R}^*.\]

The equivalence of non-dimensional and dimensional cascade can be seen by considering an example theorem in the dimensionless context which asserts that, for certain dimensionless quantities $\beta_u$ and $\beta_b$, if, \[\frac {R_0} {\beta_u} \bigg(\frac {e_0^{u^*}}{E_0^{u^*}} \bigg)^{1/4}<R<R_0 \quad \mbox{and}\quad \frac {R_0} {\beta_b} \bigg(\frac {e_0^{b^*}}{E_0^{b^*}} \bigg)^{1/4}<R<R_0,\]
then, \[\frac 1 {2K_*} E_0^* \leq \langle F^*\rangle_{R/R_0} \leq 2K_*E_0^*.\]

By the quantitative relations identified above, the consequence for the dimensional scenario is, if,
\[\frac {R_0^{1/2}} {\beta_u} \bigg(\frac {\nu \,e_0^u}{E_0^u} \bigg)^{1/4}<R<R_0 \quad \mbox{and}\quad \frac {R_0^{1/2}} {\beta_b} \bigg(\frac {\eta \,e_0^b}{E_0^b} \bigg)^{1/4}<R<R_0,\]
then, \[\frac 1 {2K_*}  \widetilde{E}_0 \leq \langle F \rangle_R \leq 2  K_* \widetilde{E}_0,\]where $\tilde E_0$ is determined at the scale of the integral domain -- in particular, it is independent of the scale and choice of a $(K_1,K_2)$-cover --  by, \[\widetilde E_0= \bigg(E_0^u+Pr~ E_0^b \bigg).\]

Finally, note that in the dimensionless variables our refined cut-off functions localize spatially to balls of (dimensionless) radius $R/R_0$ for $0<R\leq R_0$ and temporally to the (dimensionless) interval $[0,1]$ and, in analogy to properties \eqref{timecutoff} and \eqref{spacecutoff}, satisfy the following gradient estimates,
\begin{align}\label{ineq:dimensionlessCutOffs}|\nabla_* \phi (x^*,t^*)|\leq  c_0\frac {R_0} {R} \phi^\rho(x^*,t^*) \quad \mbox{and}\quad |\phi_{t^*}(x^*,t^*)|\leq c_0\phi^{\rho}(x^*,t^*)
.\end{align}


Since we can recover dimensional cascade-like behavior from the non-dimensional counterpart we are justified in considering only the latter.  We subsequently suppress the asterisks used above to indicate non-dimensionality noting that, for the remainder of this section, we are working with dimensionless quantities.

Following \cite{SeTe}, dimensionless 3D MHD is,
\begin{align*}\partial_t u-\frac 1 {Re} \Delta u &= -(u\cdot \nabla)u -\nabla p - S\bigg(\nabla \frac {|b|^2} 2 - (b\cdot \nabla)b\bigg),
\\ \partial_t b-\frac 1 {Rm} \Delta b &= \nabla \times (b\times u),
\\ \nabla\cdot u&=\nabla \cdot b =0,
\\ u(x,0)&=u_0(x) \in L^2(\R^3),
\\b(x,0)&=b_0(x)\in L^2(\R^3),
\end{align*} where $Re$ and $Rm$ are the Reynolds and magnetic Reynolds numbers respectively and the non-dimensional number, $S$, is defined in terms of the Hartmann number, $M$, to be $S=M^2/(ReRm)$ (note that the Hartmann number is a dimensionless quantity given by the ratio of the electromagnetic force to the viscous force).

In order to obtain a local cancellation between coupled non-linear terms, we assume sufficient regularity so that the following derivation is justified. Taking the scalar product of the equation of motion by $\phi u$ and the induction equation by $S \phi b$, integrating, and rearranging we obtain a flux density for total (fluid) energy into the ball $B(x_i,R/R_0)$ (here $\phi$ denotes an appropriate refined cut-off function),
\begin{align}\notag \int_0^1\int \bigg( \frac 1 2 |u|^2 + p\bigg)\cdot (\nabla \phi\cdot u)~dx~dt&=\frac 1 {Re} \int_0^1 \int |\nabla u|^2\phi~dx~dt + \frac S {Rm} \int_0^1\int |\nabla b|^2\phi~dx~dt
\\\notag&\quad-S\int_0^1\int \nabla \phi \cdot (b\times (u\times b))~dx~dt
\\\notag&\quad- \frac 1 2 \int_0^1\int |u|^2\phi_t~dx~dt- \frac 1 {2Re} \int_0^1 \int |u|^2\Delta \phi~dx~dt
\\\notag&\quad-\frac S 2 \int_0^1\int |b|^2\phi_t~dx~dt- \frac S {2Rm} \int_0^1 \int |b|^2\Delta \phi~dx~dt.
\end{align}
Bounds for the lower order terms on the right hand side follow.  The last four terms are bounded in a manner similar to that seen in the proof of \eqref{thrm:totalCascade}.  Here, however, we acknowledge the change of variable and cite the properties of our refined cut-off functions,  \eqref{ineq:dimensionlessCutOffs}, as well as the chain rule, to obtain,
\begin{align}\label{ineq:uEnergyWithReynolds}\bigg|\frac 1 2 \int_0^1\int |u|^2\phi_t~dx~dt\bigg| &\leq c_0Re \frac 1 {Re} \bigg( \frac {R_0} R\bigg)^2 \int_0^1\int |u|^2\phi^{4\rho-3} ~dx~dt,
\\\notag \bigg|\frac 1 {2Re} \int_0^1 \int |u|^2\Delta \phi~dx~dt\bigg|&\leq c_0  \frac 1 {Re} \bigg( \frac {R_0} R\bigg)^2 \int_0^1\int |u|^2\phi^{4\rho-3} ~dx~dt,
\\\notag \bigg|\frac S 2 \int_0^1\int |b|^2\phi_t~dx~dt \bigg|&\leq c_0 Rm \frac S {Rm} \bigg( \frac {R_0} R\bigg)^2\int_0^1\int |b|^2\phi^\rho ~dx~dt,
\\\notag \bigg|\frac S {2Rm} \int_0^1 \int |b|^2\Delta \phi~dx~dt \bigg|&\leq c_0\frac S {Rm} \bigg( \frac {R_0} R\bigg)^2  \int_0^1\int |b|^2\phi^{4\rho-3} ~dx~dt.
\end{align}
Repeatedly using H\"older's inequality, the Gagliardo-Nirenberg inequality, and Young's inequality, the following gives a bound for the non-linear term originating in the induction equation,
\begin{align}\notag \bigg| c_0 S\frac {R_0} R \int_0^1\int \phi^{\rho}(b\times (u\times b))~dx~dt\bigg|
&\leq c_0 S\frac {R_0} {R^{}} \int_0^1\int (|u|^{1/2}\phi^{\rho-3/4})(|b|^{1/2})(|u|^{1/2}|b|^{3/2}\phi^{3/4})~dx~dt
\\\notag &\leq c_0 S \frac {R_0} {R^{}} \int_0^1 ||u\phi^{2\rho-3/2}||_2^{1/2}||b||_2^{1/2} || (|u|^{1/2}|b|^{3/2}\phi^{3/4})||_2~dt
\\\notag &\leq c_0 \frac {R_0} {R^{}} \bigg(\frac {M^{1/2}Rm^{1/2}} {Re^{1/4}} \bigg) \bigg( \frac {M^{3/2}} {Re^{3/4}Rm^{3/2}}\bigg)
\\\notag &\qquad \cdot \int_0^1 ||u\phi^{2\rho-3/2}||_2^{1/2}||b||_2^{1/2} ||u||_2^{1/2}||\nabla (b\phi^{1/2})||_2^{3/2}~dt
\\\notag &\leq c_0\frac {M^2Rm^2R_0^4} {Re R^{4}} \big( \sup_t ||u||_2^2\big)  \big( \sup_t ||b||_2^2\big)\int_0^1 ||u\phi^{4\rho-3}||_2^2~dt
\\\notag &\qquad + \frac S {4Rm} \int_0^1 ||\nabla (b\phi^{1/2})||_2^2~dt
\\\notag &\leq c_0 (MRm)^2\big( \sup_t ||u||_2^2  \sup_t ||b||_2^2\big)\frac 1 {Re}\frac {R_0^4 } {R^{4}}\int_0^1 ||u\phi^{4\rho-3}||_2^2~dt \\\label{ineq:tripleCrossProductTerm}&\qquad + \frac S {4Rm} \int_0^1 ||(\nabla b)\phi^{1/2}||_2^2~dt + \frac {c_0S} {Rm }\frac {R_0^4} {R^4}\int_0^1 ||b\phi^{4\rho-3}||_2^2~dt.
\end{align}

Taking ensemble averages and passing  where appropriate (via Lemma \ref{lemma:ensembleaverages}) to the quantities $e_0^u$, $e_0^b$, $E_0^u$, and $E_0^b$, we have ,
\begin{align*}\langle F \rangle_R &\geq \frac 1 {K_1}  \frac 1 {Re} E_0^u - c_0 \frac {K_2} {Re}\big(1+Re + (MRm)^2\big( \sup_t ||u||_2^2  \sup_t ||b||_2^2\big)\big) \frac {R_0^4 } {R^{4}}e_0^u
\\&\qquad+ \frac 1 {K_1} \frac {S} {Rm} E_0^b - c_0 \frac {K_2S}{Rm} \big(2+Rm \big)  \frac {R_0^4 } {R^{4}}e_0^b
\\&= \frac 1 {K_1}  \frac 1 {Re} E_0^u -C_u\frac {K_2} {Re} \frac {R_0^4 } {R^{4}} e_0^u + \frac 1 {K_1} \frac {S} {Rm} E_0^b -C_b\frac {K_2 S} {Rm} \frac {R_0^4 } {R^{4}} e_0^b,\end{align*}
where in the last line we have set, \[C_u=c_0\bigg( 1+Re + (MRm)^2\big( \sup_t ||u||_2^2  \sup_t ||b||_2^2\big)\bigg)~\mbox{and}~ C_b=c_0\big(2+Rm \big).\]
Similarly, an upper bound is,
\begin{align*}\langle F \rangle_R &\leq K_2  \frac 1 {Re} E_0^u + C_u K_2\frac {1} {Re} \frac {R_0^4 } {R^{4}} e_0^u+  K_2 \frac {S} {Rm} E_0^b +C_b K_2 \frac { S} {Rm}\frac {R_0^4 } {R^{4}} e_0^b.\end{align*}

Defining now the parameters for a correction to the inertial range by, \[\beta_u=\bigg( \frac 1 {2K_1K_2C_u}\bigg)^{1/4}~\mbox{and}~\beta_b=\bigg( \frac 1 {2K_1K_2C_b}\bigg)^{1/4},\]we have justified the following theorem.
\begin{theorem}\label{thrm:totalFluidCascadeNonD}Let $\{x_i\}_{i=1}^n\subset B(x_0,R_0)$ be the centers of a $(K_1,K_2)$-cover at scale $R$.  For $\beta_u$ and $\beta_b$ defined above, if, \[\tau:=\max\bigg\{  \bigg( \frac {e_0^u} {E_0^u}\bigg)^{1/4},  \bigg(\frac {e_0^b} {E_0^b}\bigg)^{1/4} \bigg\}<\min\{ \beta_u,\beta_b\}=:\beta, \]
then for scales $R$ where $\tau/\beta \leq R/ R_0,$ we have,
 \[\frac {1} {2K_1 }\bigg( \frac 1 {Re} E_0^u +\frac S{Rm} E_0^b \bigg)\leq \langle F_{}^{} \rangle_{R}\leq  2K_2\bigg( \frac 1 {Re} E_0^u +\frac S{Rm} E_0^b\bigg).\]
\end{theorem}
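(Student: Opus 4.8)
The plan is to fold the estimates assembled in the running text of this section into the ensemble-averaging machinery of Lemma~\ref{lemma:ensembleaverages}; once the localized flux has been written as a dissipation density plus controllable errors, the theorem reduces to bookkeeping. I would begin by recording the localized identity obtained (for a cover element $B(x_i,R/R_0)$) by testing the dimensionless equation of motion against $\phi_{x_i,R}u$ and the induction equation against $S\phi_{x_i,R}b$, integrating, and integrating by parts using $\nabla\cdot u=\nabla\cdot b=0$. This expresses $F_{x_i,R}$ as the sum of the viscous and resistive dissipation densities $\frac{1}{Re}\int_0^1\int|\nabla u|^2\phi_{x_i,R}+\frac{S}{Rm}\int_0^1\int|\nabla b|^2\phi_{x_i,R}$ --- the main, nonnegative terms --- minus the four lower-order terms carrying $\phi_t$ and $\Delta\phi$, minus the cubic term $S\int_0^1\int\nabla\phi_{x_i,R}\cdot\big(b\times(u\times b)\big)$ coming from the Lorentz force and line-stretching in the induction equation.

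Next I would estimate the error terms exactly as in \eqref{ineq:uEnergyWithReynolds} and \eqref{ineq:tripleCrossProductTerm}. Using the refined cut-off bounds \eqref{ineq:dimensionlessCutOffs} and the chain rule, the four $\phi_t/\Delta\phi$ terms are dominated by a constant times $(R_0/R)^4$ times densities of the form $|u|^2\phi_{x_i,R}^{4\rho-3}$ or $|b|^2\phi_{x_i,R}^{4\rho-3}$ (with the stray factors $Re$, $Rm$ cancelling against the prefactors $1/Re$, $S/Rm$). The cubic term is handled by H\"older, the Gagliardo--Nirenberg inequality, and Young's inequality, splitting the powers of $\phi_{x_i,R}$ so that one factor reconstitutes $\|\nabla(b\phi_{x_i,R}^{1/2})\|_2$: a controlled fraction of the resistive dissipation is thereby absorbed, and the rest is again bounded by a constant times $(R_0/R)^4$ times $|u|^2\phi_{x_i,R}^{s}$ and $|b|^2\phi_{x_i,R}^{s}$ densities with $s\ge 4\rho-3$, where --- and this is the essential structural point --- the constant involves only the a priori controlled quantities $M,Re,Rm,S$ and $\sup_t\|u\|_2^2$, $\sup_t\|b\|_2^2$, and no gradients; this gives the displayed forms of $C_u$ and $C_b$.

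I would then ensemble-average over the cover and apply Lemma~\ref{lemma:ensembleaverages} to each term: the nonnegative dissipation densities ensemble-average to between $\frac{1}{K_1}$ and $K_2$ times $\frac1{Re}E_0^u$, $\frac{S}{Rm}E_0^b$, while for the error densities one uses $0\le\phi_{x_i,R}\le\phi_0\le1$ (hence $\phi_{x_i,R}^{s}\le\phi_0^{s}$, valid since the relevant exponents lie in $(0,1)$ for $\rho\in(3/4,1)$) to bound their ensemble averages by $K_2$ times $e_0^u$, $e_0^b$, taking $\delta=4\rho-3$ in the definitions of $e_0^u,e_0^b$ per the paper's convention. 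This yields precisely the two-sided estimate displayed just above the theorem. Finally, with $\beta_u=(2K_1K_2C_u)^{-1/4}$, $\beta_b=(2K_1K_2C_b)^{-1/4}$, $\beta=\min\{\beta_u,\beta_b\}$, and $\tau=\max\{(e_0^u/E_0^u)^{1/4},(e_0^b/E_0^b)^{1/4}\}$, the hypothesis $\tau/\beta\le R/R_0$ forces $(R_0/R)^4 e_0^u\le\beta_u^4 E_0^u$ and $(R_0/R)^4 e_0^b\le\beta_b^4 E_0^b$, so each error term is at most $\frac1{2K_1}$ of the corresponding enstrophy term; the asserted bounds $\frac1{2K_1}(\frac1{Re}E_0^u+\frac{S}{Rm}E_0^b)\le\langle F\rangle_R\le2K_2(\frac1{Re}E_0^u+\frac{S}{Rm}E_0^b)$ then follow after a routine collection of constants.

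I expect the cubic induction term to be the main obstacle: a spatial derivative sits on the cut-off, and one must choose the distribution of $\phi_{x_i,R}$-powers in the Gagliardo--Nirenberg step so that exactly one factor is a genuine gradient $\nabla(b\phi_{x_i,R}^{1/2})$ (absorbable into the resistive dissipation) while every leftover piece is of the admissible form $|u|^2\phi_{x_i,R}^{s}$, $|b|^2\phi_{x_i,R}^{s}$ with $s\ge\delta$ and only a priori bounded prefactors; a secondary subtlety is that the Young step squares the $(R_0/R)^2$ coming from $\nabla\phi_{x_i,R}$ into $(R_0/R)^4$, which is why the inertial-range correction enters here through a fourth root (the $1/4$ exponents in $\beta_u$, $\beta_b$, $\tau$), in contrast to the square-root correction of Theorem~\ref{thrm:totalCascade}.
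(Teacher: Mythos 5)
Your proposal is correct and follows essentially the same route as the paper, whose proof of this theorem is precisely the chain of estimates in the surrounding text: the localized energy identity from testing with $\phi u$ and $S\phi b$, the cut-off bounds \eqref{ineq:uEnergyWithReynolds}, the Gagliardo--Nirenberg/Young treatment of the cubic induction term \eqref{ineq:tripleCrossProductTerm} with partial absorption into the resistive dissipation, ensemble averaging via Lemma \ref{lemma:ensembleaverages}, and the choice $\beta_u=(2K_1K_2C_u)^{-1/4}$, $\beta_b=(2K_1K_2C_b)^{-1/4}$. Your identification of the gradient-free structure of $C_u,C_b$ and of the origin of the fourth-root correction matches the paper's argument exactly.
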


\begin{remark}The above is particularly relevant in scenarios where $S/Rm\simeq 1$ and the integral scale magnetic energy dominates the integral scale kinetic energy.  In this case it is possible to free the parameter $\beta_u$ from its dependence on $Re$ by substituting for the bound \eqref{ineq:uEnergyWithReynolds}, the bound \[\frac 1 2 \int_0^1 \int |u|^2\partial_t\phi_{x_i,R}~ dx~dt\leq K_2e_0^b.\]   This allows for very large values of $Re$ and is applicable to settings involving the confinement of a liquid metal by a strong magnetic guide field \cite{VoDaKn05}. Without these assumptions on the flow the physical relevance of the above result is diminished by the fact that the dependencies of $\beta_u$ and $\beta_b$ on the fluid and magnetic Reynold's numbers result in an inertial range which decreases in length as $Rm$ and $Re$ increase.
\end{remark}

%
%

\section{Other cascade-like dynamics}\label{sec:indiecascades}
We presently identify conditions under which cascade-like dynamics are exhibited by individual energy exchange mechanisms -- including the exchange of energy between the velocity and magnetic fields -- are evident in the context of the dimensionless equations.  Throughout this section all solutions are assumed to be suitable weak solutions to the dimensionless 3D MHD system, regular enough for the localized energy equality to hold.

The localized (by the scalar function $\phi$ to a ball of radius $R/R_0$) flux term responsible for the kinetic to kinetic energy exchange driven by the advection of the velocity field is,
\begin{align}\label{term:uTouEnergy}F_{x_i,R}^u:= -\int_0^1\int(u\cdot \nabla )u \cdot( \phi~u)~dx~dt =\frac 1 2 \int_0^1\int |u|^2(u\cdot \nabla\phi)~dx~dt.\end{align}

Similarly, the (localized) magnetic to magnetic energy transfer driven by the advection of the velocity field is,
\begin{align}\label{term:bTobEnergy}F_{x_i,R}^b:=-\int_0^1\int (u\cdot \nabla b) \cdot (\phi ~b)~dx~dt =\frac 1 2 \int_0^1\int |b|^2(u\cdot \nabla \phi)~dx~dt.\end{align}

The fluid pressure flux-type term is, \begin{align}\label{term:pressureFlux}F_{x_i,R}^p:=-\int_0^1 \int \nabla p \cdot (\phi~ u)~dx~dt&=\int_0^1 \int p ~(u \cdot \nabla \phi)~dx~dt.
\end{align}

As already mentioned, the transfer of magnetic to kinetic energy is driven by the Lorentz force while the stretching of the magnetic field lines is responsible for the transfer of kinetic energy to magnetic energy; combined (locally), they yield the following term -- the advection of cross-helicity by the magnetic field,

\begin{align}\label{term:crossEnergy}F_{x_i,R}^{ub}:= \int_0^1 \int  ((b\cdot \nabla )b \cdot (\phi~u ) + (b\cdot\nabla)u \cdot (\phi ~b ))~dx~dt= -\int_0^1\int (u\cdot b )(b\cdot \nabla \phi)~dx~dt.
\end{align}

Until now we have only investigated cascades associated with collections of flux-type terms including the term for the flux of the fluid pressure.  Due to the unique structure of this term, additional effort is required to establish cascade-like behavior for combined fluxes excluding the pressure flux. More precisely, we will need to bound the quantity, \[\int_0^1 \int p~ (u\cdot \nabla \phi)~dx~dt.\]
To do so we will use the estimate,
\begin{align}\notag ||u\phi^{1/2}||_3&=\bigg( \int (|u|^{3/2}  \phi^{3/4})(|u|^{3/2}\phi^{3/4})~dx\bigg)^{1/3}
\\\notag &\leq \bigg( \bigg(\int |u|^2\phi~dx \bigg)^{3/4} \bigg( \int |u|^6\phi^3~dx \bigg)^{1/4} \bigg)^{1/3}
\\\notag&\leq C ||u\phi^{1/2}||_2^{1/2}||\nabla(u\phi^{1/2})||_2^{1/2},
\end{align}
which allows that,
\begin{align}&\notag \bigg|\int_0^1 \int p~ (u\cdot \nabla \phi)~dx~dt\bigg|
\\&\leq C \frac {R_0} R \int_0^1 ||p\phi^{\rho-1/2}||_{3/2} || u \phi^{1/2}||_3~dt
\\\notag &\leq  C \frac {R_0} R \int_0^1 ||p\phi^{\rho-1/2}||_{3/2} || u \phi^{1/2}||_2^{1/2} ||\nabla (u\phi^{1/2})||_2^{1/2}~dt
\\\notag &\leq C \frac {R_0} R \bigg(\int_0^1 ||p\phi^{\rho-1/2}||_{3/2}^{3/2} ~dt \bigg)^{2/3} \bigg(\int_0^1  || u \phi^{1/2}||_2^{6} ~dt \bigg)^{1/12} \bigg(\int_0^1  ||\nabla (u\phi^{1/2})||_2^{2} ~dt \bigg)^{1/4}
\\\notag &\leq   C Re^{1/3} \bigg(\frac {R_0} R \bigg)^{4/3} \bigg(\int_0^T ||p\phi^{\rho-1/2}||_{3/2}^{3/2}~dt\bigg)^{8/9} \bigg( \int_0^1 || u \phi^{1/2}||_2^{6}~dt\bigg)^{1/9}
\\\notag &\qquad + \frac 1 {8 Re} \int_0^1||\nabla (u\phi^{1/2})||_2^2~dt
\\ \notag&\leq   {C_p} Re^{1/3} \bigg(\frac {R_0} R \bigg)^{4/3} \bigg( \int_0^1 || u \phi^{1/2}||_2^{2}~dt\bigg)^{1/9}+ C \frac 1 {Re} \bigg(\frac {R_0} R \bigg)^{2} \int_0^1 ||u \phi^{2\rho-1}||_2^2~dt
\\ &\qquad
 +  \frac 1 {4Re}  \int_0^1 ||(\nabla u) \phi^{1/2}||_2^2~dt,\label{ineq:pressurebound}
\end{align}
where we have used H\"older's inequality, the Gagliardo-Nirenberg inequality, and Young's inequality.  Note that the quantity appearing above, 

\begin{equation}\label{cp}
 C_p=\big( \sup_t||u\phi_0^{1/2}||_2\big )^{4/9}\bigg(\int_0^1 ||p\phi_0^{\rho-1/2}||_{3/2}^{3/2}~dt\bigg)^{8/9}, 
 \end{equation}
is dimensionless and \emph{a priori} bounded in virtue of regularity properties of suitable weak solutions (cf. \cite{HeXin1}). In addition, it contains no gradients and is independent of the particular cover element at scale $R$.

Regarding the first term on the right hand side of \eqref{ineq:pressurebound}, in order to pass from ensemble averages of localized quantities to an integral scale quantity we will need a simple consequence of the finite form of Jensen's inequality.  Specifically, for $\{a_i\}_{i=1}^n$ a set of non-negative values,
\begin{align*} \sum_{i=1}^n \frac {a_i^{1/9}} n\leq \bigg( \sum_{i=1}^n \frac {a_i} n \bigg)^{1/9}. \end{align*}
Taking an ensemble average of normalized quantities yields,
\begin{align}\notag \frac {C_p} n \sum_{i=1}^n \bigg(\frac {R_0} {R}\bigg)^3  \bigg(\frac {R_0} R \bigg)^{\frac 4 3} \bigg( \int_0^1 \int |u|^2 \phi_{x_i,R}~dt\bigg)^{\frac 1 9} &\leq  C_p \bigg(\frac {R_0} R \bigg)^4 \bigg( \frac 1 n \sum_{i=1}^n  \int_0^1  \bigg(\frac {R_0} R \bigg)^3 \int |u|^2 \phi_{x_i,R}~dx~dt \bigg)^{\frac 1 9}
\\   &\leq  {C_p}  \bigg(\frac {R_0} R \bigg)^4 \big(K_2 e_0^u\big)^{\frac 1 9} \label{ineq:pressureEnsAv}.\end{align}

Bounds for the remaining flux densities are,
\begin{align}\notag \big|F_{x_i,R}^u\big| &\leq C\frac {R_0} {R} \int_0^1 ||u||_2 ||u\phi^{2\rho-3/2}||_2^{1/2}||\nabla (u\phi^{1/2})||_2^{3/2}~dx~dt
\\\notag &\leq C\bigg( \frac {R_0 } {  R }\bigg)^{4} \sup_t ||u||_2^4 \int_0^1  ||u\phi^{2\rho-3/2}||_2^2~dt + \frac 1 {8Re} \int_0^1   ||\nabla (u\phi^{1/2})||_2^2~dt
\\&\notag \leq C \big(Re^{4/3}\sup_t ||u||_2^4 +1\big)\frac 1 {Re}\bigg( \frac {R_0 } {  R }\bigg)^{4}  \int_0^1  ||u\phi^{2\rho-3/2}||_2^2~dt
\\&\notag\qquad + \frac 1 {4Re} \int_0^1  ||(\nabla  u)\phi^{1/2}||_2^2~dt,
\end{align}and,
\begin{align}\big|F_{x_i,R}^b\big|,~\big|F_{x_i,R}^{ub}\big|&\leq  C \big(Rm^{4/3}\sup_t ||u||_2^4 +1\big)\frac S {Rm}\bigg( \frac {R_0 } {  R }\bigg)^{4}  \int_0^1  ||b\phi^{2\rho-3/2}||_2^2~dt
\\&\notag\qquad + \frac S {4Rm} \int_0^1  ||(\nabla  b)\phi^{1/2}||_2^2~dt.
\end{align}

Henceforth, we will be concerned with the scenario in which the kinetic energy over the integral domain and within the prescribed time-scale remains bounded away from zero. In particular we take this to mean $e_0^u\geq 1$. This is the trade-off for considering \emph{purely kinetic} fluxes.
The result concerning cascade-like behavior for the direct kinetic to kinetic energy transfer driven by the advection of the velocity field follows.

\begin{theorem}\label{thrm:FluidEnergyCascadeNonD}Assume $u$ and $b$ are solutions of 3D MHD possessing sufficient regularity so that the generalized energy equality holds.  Let $\{x_i\}_{i=1}^n\subset B(x_0,R_0)$ be the centers of a $(K_1,K_2)$-cover at scale $R$.  Suppose that the flow is such that $e_0^u\geq 1$. For certain values $\beta_u$ and $\beta_b$, if, \[\tau:=\max\bigg\{  \bigg( \frac {e_0^u} {E_0^u}\bigg)^{1/4},  \bigg(\frac {e_0^b} {E_0^b}\bigg)^{1/4} \bigg\}<\min\{ \beta_u,\beta_b\}=:\beta, \]
then, for scales $R$ where $\tau/\beta \leq R/ R_0,$ we have,
 \[\frac {1} {2K_1 }\bigg( \frac 1 {Re} E_0^u +\frac S{Rm} E_0^b \bigg)\leq \langle F_{}^{u} \rangle_{R}\leq  2K_2\bigg( \frac 1 {Re} E_0^u +\frac S{Rm} E_0^b\bigg).\]
\end{theorem}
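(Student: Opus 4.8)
The plan is to mimic the derivation already carried out for Theorem \ref{thrm:totalFluidCascadeNonD}, but now working with the purely kinetic flux $F^u_{x_i,R}$ of \eqref{term:uTouEnergy} in place of the total fluid energy flux $F_{x_i,R}$. Recall that in Theorem \ref{thrm:totalFluidCascadeNonD} the total fluid flux split into a kinetic part and a pressure part, and the pressure part was absorbed into the localized energy equality; here we are deprived of that cancellation and must instead control the pressure contribution by hand. So the first step is to start from the localized energy equality obtained by testing the momentum equation against $\phi u$ and the induction equation against $S\phi b$, isolating $F^u_{x_i,R}=\tfrac12\int_0^1\int|u|^2(u\cdot\nabla\phi)$ on one side. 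This produces the dissipative terms $\tfrac1{Re}\int|\nabla u|^2\phi$ and $\tfrac{S}{Rm}\int|\nabla b|^2\phi$ with favorable sign, together with the lower-order time-derivative and Laplacian terms, the triple cross-product term from the induction equation, \emph{and} now the pressure flux term $\int_0^1\int p\,(u\cdot\nabla\phi)$ which no longer cancels.

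The second step is to assemble the requisite bounds. The time-derivative terms, the Laplacian terms, and the triple cross-product term are bounded exactly as in Section \ref{sec:fluid} (inequalities \eqref{ineq:uEnergyWithReynolds} and \eqref{ineq:tripleCrossProductTerm}), extracting one quarter of each dissipation integral and leaving factors of $(R_0/R)^4$ times $e_0^u$ and $e_0^b$ after ensemble averaging. The pressure flux term is controlled by the chain of Hölder–Gagliardo-Nirenberg–Young estimates culminating in \eqref{ineq:pressurebound}: it contributes $\tfrac14\int_0^1\|(\nabla u)\phi^{1/2}\|_2^2$ (another quarter of the kinetic dissipation), a term $\tfrac{C}{Re}(R_0/R)^2\int\|u\phi^{2\rho-1}\|_2^2$, and the awkward sublinear term $C_pRe^{1/3}(R_0/R)^{4/3}(\int_0^1\|u\phi^{1/2}\|_2^2)^{1/9}$. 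The flux density $F^u_{x_i,R}$ itself is bounded by the displayed estimate for $|F^u_{x_i,R}|$, again yielding a quarter of the kinetic dissipation plus an $(R_0/R)^4$ term. Summing these, the four quarter-shares of $\tfrac1{Re}\int|\nabla u|^2\phi$ reconstitute the full kinetic dissipation so that, after applying Lemma \ref{lemma:ensembleaverages} to pass from ensemble averages to $E_0^u$, $E_0^b$, $e_0^u$, $e_0^b$, one arrives at
\begin{align*}
\langle F^u\rangle_R&\geq \frac1{K_1}\Big(\frac1{Re}E_0^u+\frac{S}{Rm}E_0^b\Big)-C\frac{K_2}{Re}\Big(\frac{R_0}{R}\Big)^4 e_0^u-C\frac{K_2S}{Rm}\Big(\frac{R_0}{R}\Big)^4 e_0^b-C_pK_2\Big(\frac{R_0}{R}\Big)^4(e_0^u)^{1/9},
\end{align*}
with the symmetric upper bound, where in handling the $1/9$-power term we invoke the finite-form Jensen inequality exactly as in \eqref{ineq:pressureEnsAv}.

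The main obstacle, and the reason this is not a verbatim copy of the earlier argument, is the sublinear term $(R_0/R)^4(e_0^u)^{1/9}$: it scales like $(R_0/R)^4$ rather than pairing with $E_0^u/Re$, so to absorb it into $\tfrac1{2K_1}(\tfrac1{Re}E_0^u)$ we need a \emph{lower} bound on $e_0^u$, which is precisely why the hypothesis $e_0^u\geq1$ is imposed; under that hypothesis $(e_0^u)^{1/9}\leq e_0^u$ (in the dimensionless normalization $(e_0^u)^{1/9}\le (e_0^u)$ once $e_0^u\ge 1$), so this term is dominated by a constant multiple of $(R_0/R)^4 e_0^u$ and merges with the genuine error terms. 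The correction parameters $\beta_u$ and $\beta_b$ are then defined so that on the range $\tau/\beta\leq R/R_0$ (with $\tau=\max\{(e_0^u/E_0^u)^{1/4},(e_0^b/E_0^b)^{1/4}\}$) each error term is at most $\tfrac1{2K_1}$ of the corresponding dissipation, the negative contributions summing to at most half of $\tfrac1{K_1}(\tfrac1{Re}E_0^u+\tfrac{S}{Rm}E_0^b)$; concretely $\beta_u$ will have the form $(c_0K_1K_2(1+Re+(MRm)^2\sup_t\|u\|_2^2\sup_t\|b\|_2^2+C_p))^{-1/4}$ and $\beta_b$ the form $(c_0K_1K_2(1+Rm))^{-1/4}$. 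The upper bound is immediate from the positivity of the added terms together with the same choice of $\beta$, giving the claimed two-sided estimate
\[\frac1{2K_1}\Big(\frac1{Re}E_0^u+\frac{S}{Rm}E_0^b\Big)\leq\langle F^u\rangle_R\leq 2K_2\Big(\frac1{Re}E_0^u+\frac{S}{Rm}E_0^b\Big).\]
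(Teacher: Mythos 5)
Your proposal follows essentially the same route as the paper's proof: isolate $F^u_{x_i,R}$ from the localized energy equality, control the pressure flux via \eqref{ineq:pressurebound} together with the Jensen-type estimate \eqref{ineq:pressureEnsAv} and the hypothesis $e_0^u\geq 1$ (which lets the exponent $1/9$ be replaced by $1$), bound the triple cross-product and lower-order terms as in Section \ref{sec:fluid}, and then ensemble-average and choose $\beta_u,\beta_b$. One small correction: the displayed bound on $|F^u_{x_i,R}|$ is not needed (and should not be invoked) here, since $F^u_{x_i,R}$ is the quantity being isolated on the left-hand side of the equality rather than an error term to be absorbed into the dissipation; your ``four quarter-shares reconstitute the full kinetic dissipation'' accounting does not hold as stated (the absorbed quarters deplete, not reconstitute, the dissipation), but dropping that superfluous step the constants work out as in the paper.
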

\begin{proof}The main estimates have already been established.  Based on the localized energy equality we have
\begin{align*}  F_{x_i,R}^u  &\geq  \frac {1} {Re} E_{x_i,R}^u +\frac { S} {Rm} E_{x_i,R}^b - \big| F^p_{x_i,R}+N_{x_i,R}\big|
\\&\quad- \frac 1 2 \int_0^1\int |u|^2\phi_t~dx~dt- \frac 1 {2Re} \int_0^1 \int |u|^2\Delta \phi~dx~dt
\\&\quad-\frac S 2 \int_0^1\int |b|^2\phi_t~dx~dt- \frac S {2Rm} \int_0^1 \int |b|^2\Delta \phi~dx~dt.
\end{align*}
By the assumption on the flow we can replace the exponent of $1/9$ in the bound \eqref{ineq:pressureEnsAv} with $1$. Taking ensemble averages, we obtain the lower bound,
\begin{align*}\langle F^u\rangle_R&\geq  \frac {1} {2K_1Re} E_{x_i,R}^u  - C_u  K_2 \frac 1 {Re} \bigg(\frac {R_0} {R}\bigg)^4 e_0^u +\frac { S} {K_1 Rm} E_{x_i,R}^b - C_b K_2 \frac {S} {Rm}\bigg(\frac {R_0} {R}\bigg)^4 e_0^b,
\end{align*}
where \[C_u=C\big(C_pRe^{4/3}+Re+ 1+ (MRm)^2\sup_t ||u||_2^2\sup_t ||b||_2^2\big),\]
and, \[C_b=C \big( Rm^{4/3}\sup_t ||u||_2^4 +Rm+1\big).\]This is sufficient to establish values for $\beta_u$ and $\beta_b$ (containing no gradients) and conclude in the standard fashion.
\end{proof}

At this point, noting that $F^{b}$ and $F^{ub}$ both satisfy \eqref{ineq:tripleCrossProductTerm}, we have already demonstrated the steps involved in establishing cascade-like behavior for the densities $F^b$ and $F^{ub}$.  We consequently omit the proofs. 
\begin{theorem} Let $\{x_i\}_{i=1}^n\subset B(x_0,R_0)$ be the centers of a $(K_1,K_2)$-cover at scale $R$.  Suppose that the flow is such that $e_0^u\geq 1$. For certain values $\beta_u$ and $\beta_b$, if, \[\tau:=\max\bigg\{  \bigg( \frac {e_0^u} {E_0^u}\bigg)^{1/4},  \bigg(\frac {e_0^b} {E_0^b}\bigg)^{1/4} \bigg\}<\min\{ \beta_u,\beta_b\}=:\beta, \]
then for scales $R$ where $\tau/\beta \leq R/ R_0,$ we have,
 \[\frac {1} {2K_1 }\bigg( \frac 1 {Re} E_0^u +\frac S{Rm} E_0^b \bigg)\leq \langle F_{}^{b},F^{ub} \rangle_{R}\leq  2K_2\bigg( \frac 1 {Re} E_0^u +\frac S{Rm} E_0^b\bigg).\]
\end{theorem}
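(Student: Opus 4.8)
The plan is to run the argument of Theorems~\ref{thrm:totalFluidCascadeNonD} and~\ref{thrm:FluidEnergyCascadeNonD} with $F^b$ (respectively $F^{ub}$) in the role of the pivot flux. The starting identity is the coupled localized energy equality already assembled in Section~\ref{sec:fluid}: testing the dimensionless equation of motion against $\phi_{x_i,R}u$ and the induction equation against $S\phi_{x_i,R}b$, integrating over $(0,1)\times\R^3$, and rearranging yields
\[
F^u_{x_i,R}+F^p_{x_i,R}+2S\,F^b_{x_i,R}+S\,F^{ub}_{x_i,R}
=\frac1{Re}\int_0^1\!\int|\nabla u|^2\phi_{x_i,R}+\frac S{Rm}\int_0^1\!\int|\nabla b|^2\phi_{x_i,R}+\mathcal E_{x_i,R},
\]
where $\mathcal E_{x_i,R}$ collects the lower-order terms built from $|u|^2\phi_t$, $|b|^2\phi_t$, $|u|^2\Delta\phi$, $|b|^2\Delta\phi$ (and the sign-favorable endpoint contribution at $t=1$), and where I have used $b\times(u\times b)=|b|^2u-(u\cdot b)b$, so that $\int\nabla\phi\cdot\big(b\times(u\times b)\big)=2F^b_{x_i,R}+F^{ub}_{x_i,R}$. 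This is the only clean local energy relation at hand: the stretching term $\int(b\cdot\nabla)u\cdot(\phi b)$ and the Lorentz work $\int(b\cdot\nabla)b\cdot(\phi u)$ cannot be controlled individually by energy-type estimates, and it is precisely their sum $F^{ub}_{x_i,R}$, together with the magnetic-pressure part yielding $SF^b_{x_i,R}$, that survives; this is why $F^b$ and $F^{ub}$ are packaged together.

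Next I isolate $2S\,F^b_{x_i,R}$ (respectively $S\,F^{ub}_{x_i,R}$) and bound every remaining term on the right. The terms in $\mathcal E_{x_i,R}$ are estimated exactly as in the proof of Theorem~\ref{thrm:totalCascade} and via \eqref{ineq:uEnergyWithReynolds}, invoking the cut-off bounds \eqref{ineq:dimensionlessCutOffs}; they are dominated by $(R_0/R)^2$ times $\tfrac1{Re}e_0^u$ and $\tfrac S{Rm}e_0^b$. The pressure flux $F^p_{x_i,R}$ is controlled by \eqref{ineq:pressurebound}, its ensemble average by \eqref{ineq:pressureEnsAv}; here the hypothesis $e_0^u\geq1$ allows the exponent $1/9$ to be replaced by $1$, giving a bound by $C_pRe^{1/3}(R_0/R)^4K_2e_0^u$ plus a small multiple of $\tfrac1{Re}E_0^u$, with $C_p$ as in \eqref{cp}. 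Finally $F^u_{x_i,R}$ and the complementary flux ($F^{ub}_{x_i,R}$ when estimating $F^b$, and $F^b_{x_i,R}$ when estimating $F^{ub}$) are bounded by the kinematic estimates recorded just before Theorem~\ref{thrm:FluidEnergyCascadeNonD}, which are produced by the same H\"older/Gagliardo--Nirenberg/Young chain as \eqref{ineq:tripleCrossProductTerm} and each split into $(R_0/R)^4$ times $e_0^u$ or $e_0^b$ plus a fixed small fraction of $\tfrac1{Re}\int|\nabla u|^2\phi$ or $\tfrac S{Rm}\int|\nabla b|^2\phi$.

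I then take ensemble averages over the $(K_1,K_2)$-cover and pass to the integral-scale quantities $e_0^u,e_0^b,E_0^u,E_0^b$ using Lemma~\ref{lemma:ensembleaverages} (and, for the term carrying the $1/9$-th power, the finite form of Jensen's inequality as in \eqref{ineq:pressureEnsAv}). The small fractions of the localized enstrophies are absorbed into $\langle\tfrac1{Re}\int|\nabla u|^2\phi_{x_i,R}+\tfrac S{Rm}\int|\nabla b|^2\phi_{x_i,R}\rangle_R\ge\tfrac1{K_1}\big(\tfrac1{Re}E_0^u+\tfrac S{Rm}E_0^b\big)$, while every term carrying a factor $(R_0/R)^2$ or $(R_0/R)^4$ times $e_0^u$ or $e_0^b$ is made smaller than any prescribed fraction of $\tfrac1{Re}E_0^u+\tfrac S{Rm}E_0^b$ once $R/R_0\ge\tau/\beta$, where $\tau=\max\{(e_0^u/E_0^u)^{1/4},(e_0^b/E_0^b)^{1/4}\}$, provided $\beta=\min\{\beta_u,\beta_b\}$ is small enough. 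The admissible $\beta_u,\beta_b$ depend only on $K_1,K_2,Re,Rm,S,M,C_p$ and $\sup_t\|u\|_2,\sup_t\|b\|_2$ --- a priori bounded and, crucially, gradient-free --- exactly as in Theorem~\ref{thrm:FluidEnergyCascadeNonD}. This produces the lower bound on $\langle F^b\rangle_R$ (resp. $\langle F^{ub}\rangle_R$); the matching upper bound follows immediately by feeding the kinematic estimate for $|F^b_{x_i,R}|$ (resp. $|F^{ub}_{x_i,R}|$) into the ensemble average and absorbing the $(R_0/R)^4e_0$ part in the inertial range.

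The one genuinely delicate point is the coupling forced by the non-locality of the inter-field transfer: since only the combination $2SF^b+SF^{ub}$ (alongside $F^u+F^p$) appears cleanly in any local energy identity, one first extracts the two-sided estimate for $2SF^b+SF^{ub}$ and then peels off $F^b$ and $F^{ub}$ one at a time, using the kinematic bound on the other; the bookkeeping that must be checked is that this peeling, together with the absorption of the pressure flux (for which $e_0^u\ge1$ is essential) and of the cut-off-derivative terms, is consistent with the comparability constants in the statement and with a cover-independent, gradient-free choice of $\beta_u,\beta_b$.
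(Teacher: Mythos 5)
Your proposal is correct and follows the route the paper intends (the paper omits this proof, noting only that $F^b$ and $F^{ub}$ both satisfy the bound \eqref{ineq:tripleCrossProductTerm} and that the steps mirror those of Theorems \ref{thrm:totalFluidCascadeNonD} and \ref{thrm:FluidEnergyCascadeNonD}): isolate the target flux in the localized energy equality, bound $F^u$, $F^p$, the complementary inter-field flux, and the cut-off-derivative terms by the kinematic estimates already recorded, ensemble-average via Lemma \ref{lemma:ensembleaverages} (with Jensen and the hypothesis $e_0^u\geq 1$ handling the pressure term), and absorb the $(R_0/R)^4 e_0$ contributions throughout the inertial range. Your explicit identification of the coefficient structure $2SF^b+SF^{ub}$ in the identity, and the resulting need to peel off one flux while kinematically bounding the other, is a point the paper glosses over --- strictly the peeling controls $2S\langle F^b\rangle_R$ and $S\langle F^{ub}\rangle_R$, so the stated comparability constants hold only up to factors of $S$ --- but that is an imprecision in the theorem statement rather than a gap in your argument.
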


\section{Locality of the flux}
\label{sec:locality}
According
to turbulence phenomenology in the purely hydrodynamical setting,
the average energy flux at scale $R$ is supposed to be \emph{well-correlated}
only with the average fluxes at \emph{nearby scales} (throughout the 
inertial range). This phenomenon has been confirmed in
\cite{E05, CCFS08, DaGr1, DaGr2}.
In the plasma setting, the question of locality has been somewhat
controversial. Recently, Aluie and Eyink \cite{AlEy10} produced an argument
in favor of locality of the total energy and cross-helicity fluxes. Numerical work also supports locality (cf. \cite{DVC-05} for the case of decaying turbulence).

Our context allows us to affirm a particular flux's locality as a direct consequence
of the existence of the corresponding, nearly-constant turbulent cascade
\emph{per unit mass}. 
We illustrate this in the case of the kinetic energy flux
(transported by the velocity).
Denote the time-averaged local fluxes associated to the cover
element $B(x_i,R)$ by $\hat{\Psi}_{x_i,R}$,
\[
\hat{\Psi}_{x_i,R} = \frac{1}{T} \int_0^T  \int
\frac{1}{2}|u|^2 (u\cdot \nabla\phi_i) \, dx,
\]
and the time-averaged local fluxes associated to the cover
element $B(x_i,R)$, \emph{per unit mass}, by $\hat{\Phi}_{x_i,R}$,
\[
\hat{\Phi}_{x_i,R} = \frac{1}{T} \int_0^T  \frac{1}{R^3} \int
\frac{1}{2}|u|^2 (u\cdot \nabla\phi_i) \, dx.
\]
Then, the (time and ensemble) averaged flux is given by,
\[
 \langle\Psi\rangle_R = \frac{1}{n}\sum_{i=1}^n \hat{\Psi}_{x_i,R} =
 R^3 \, \langle \Phi \rangle_R = R^3 \frac{1}{n}\sum_{i=1}^n \hat{\Phi}_{x_i,R}.
\]
 
The following manifestation of locality follows directly from
Theorem \ref{thrm:FluidEnergyCascadeNonD}.
Let $R$ and $r$ be
two scales within the inertial range delineated in the theorem. Then,
\[
\frac{1}{4K_1^2} \biggl(\frac{r}{R}\biggr)^3 \le \frac{\langle \Psi
\rangle_r}{\langle \Psi \rangle_R} \le 4K_2^2
\biggl(\frac{r}{R}\biggr)^3.
\]

In particular, if $r=2^k R$ for some integer $k$, 
\[
\frac{1}{4K_1^2} \ 2^{3k} \le \frac{\langle \Psi \rangle_{2^k
R}}{\langle \Psi \rangle_R} \le 4K_2^2 \ 2^{3k},
\]
i.e., along the \emph{dyadic scale}, the locality propagates exponentially.

\section{A scenario exhibiting predominant $u$-to-$b$ energy transfer}
\label{sec:scenario}
In \cite{DaGr5} a dynamic estimate is given on the vortex-stretching term (in the vorticity formulation of 3D NSE) -- across a range of scales -- using the ensemble averaging process we have illustrated above.  The purpose was to present a mathematical evidence of the creation and persistence of integral scale length vortex filaments by establishing positivity of the ensemble-averaged vortex stretching term across a range of scales extending to the integral scale.

In the induction equation for the magnetic field, the nonlinear term $(b\cdot \nabla)u$ is responsible for the stretching of magnetic field lines.  Positivity of $(b\cdot \nabla)u\cdot(\phi ~b)$ indicates the magnetic field line is being elongated, a phenomenon which corresponds to a transfer of energy from the velocity field to the magnetic field (negativity would reflect a diminution of the field line and a local transport of energy from the magnetic field to the fluid flow). Consequently, to conclude that the predominant energy exchange between the velocity and magnetic fields is from the velocity field to the magnetic field across a range of physical scales, it will be sufficient to establish (in an appropriate statistical sense) the positivity of $(b\cdot \nabla)u\cdot(\phi ~b)$ across these scales.  Before proceeding to this task we remark that recent numerical work (cf. \cite{DVC-05}) indicates that imbalanced exchanges are common in certain forced and decaying turbulent regimes.

We label the space-time localized quantity of interest as,
\[V_{x_i,R}=\int_0^1\int (b\cdot\nabla)u\cdot (b~\phi_{x_i,R})~dx~dt,\]and note that our work is carried out in the context of the dimensionless formulation of 3D MHD where we take $Rm=S=1$ for convenience. We also assume the weak solution in question is regular. 

\begin{theorem} Let $\{x_i\}_{i=1}^n\subset B(x_0,R_0)$ be the centers of a $(K_1,K_2)$-cover at scale $R$.   For a certain value (which will be apparent in the proof) $\beta>0$, if, \[\tau:=   \bigg(  \frac {e_0^b} {E_0^b}\bigg)^{1/4}  < \beta, \]
then for scales $R$ where $\tau/\beta \leq R/R_0,$ we have,
 \[\frac {1} {2K_1 }E_0^b\leq \langle V_{x_i,R}  \rangle_{R}\leq  2K_2  E_0^b .\]
\end{theorem}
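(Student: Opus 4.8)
The plan is to start from the localized energy equality for the dimensionless system (with $Rm=S=1$), which expresses $V_{x_i,R}$ in terms of the magnetic enstrophy on the cover element plus lower-order terms. Specifically, testing the induction equation against $\phi_{x_i,R}\, b$ and integrating gives
\[
V_{x_i,R} = \int_0^1\!\!\int |\nabla b|^2 \phi_{x_i,R}\,dx\,dt + (\text{terms involving } \phi_t, \Delta\phi, \text{ and } u\cdot\nabla\phi),
\]
so the strategy is to bound the lower-order terms by $c_0 (R_0/R)^4$ times an integral-scale magnetic energy quantity, exactly as in the proofs of Theorem \ref{thrm:totalFluidCascadeNonD} and Theorem \ref{thrm:FluidEnergyCascadeNonD}. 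The diffusive term $|\nabla b|^2\phi$ is what passes, via Lemma \ref{lemma:ensembleaverages}, to $E_0^b$ with the factors $1/K_1$ and $K_2$; this furnishes the skeleton $\frac{1}{K_1}E_0^b \leq \langle V\rangle_R + (\text{error}) $ and the matching upper bound.

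First I would write out the three remainder contributions. The $\phi_t$ and $\Delta\phi$ terms, namely $\frac12\int|b|^2\phi_t$ and $\frac1{2}\int|b|^2\Delta\phi$ (with the $Rm=1$ normalization), are handled verbatim as in the earlier proofs using the cut-off estimates \eqref{ineq:dimensionlessCutOffs}: each is bounded by $c_0(R_0/R)^2\int_0^1\!\int|b|^2\phi^{4\rho-3}\,dx\,dt$, which ensemble-averages to $\lesssim K_2 (R_0/R)^2 e_0^b$. The genuinely new term is the advective one, $-\int_0^1\!\int (b\cdot u)(b\cdot\nabla\phi)\,dx\,dt$ — this is precisely $F^{ub}_{x_i,R}$ up to the localization, and we already know from the discussion preceding Theorem \ref{thrm:FluidEnergyCascadeNonD} that it obeys the bound \eqref{ineq:tripleCrossProductTerm}/\eqref{ineq:uEnergyWithReynolds}-type estimate: Hölder, Gagliardo--Nirenberg, and Young give
\[
\Bigl| \int_0^1\!\!\int (b\cdot u)(b\cdot\nabla\phi)\,dx\,dt\Bigr| \leq \frac14\int_0^1\!\!\int|\nabla b|^2\phi\,dx\,dt + c_0\bigl(1+\sup_t\|u\|_2^2\sup_t\|b\|_2^2\bigr)\Bigl(\frac{R_0}{R}\Bigr)^4\int_0^1\!\!\int |b|^2\phi^{4\rho-3}\,dx\,dt,
\]
where absorbing a quarter of the enstrophy into the leading term is what keeps the coefficient in front of $\langle|\nabla b|^2\phi\rangle$ above $1/2$. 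Collecting all errors, ensemble-averaging via Lemma \ref{lemma:ensembleaverages}, and passing to $E_0^b$ and $e_0^b$, one arrives at
\[
\langle V\rangle_R \geq \frac{1}{2K_1}E_0^b - C K_2 \Bigl(\frac{R_0}{R}\Bigr)^4 e_0^b, \qquad \langle V\rangle_R \leq 2K_2 E_0^b + C K_2 \Bigl(\frac{R_0}{R}\Bigr)^4 e_0^b,
\]
with $C=c_0(1+\sup_t\|u\|_2^2\sup_t\|b\|_2^2)$ a dimensionless, gradient-free, cover-independent constant.

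Finally I would set $\beta := (2CK_1K_2)^{-1/4}$; then the hypothesis $\tau=(e_0^b/E_0^b)^{1/4}<\beta$ together with $R\geq (\tau/\beta) R_0$ forces $C K_2 (R_0/R)^4 e_0^b \leq \frac{1}{2K_1}E_0^b$, which kills the error in the lower bound (giving $\langle V\rangle_R\geq \frac{1}{2K_1}E_0^b$) and is likewise dominated by $E_0^b$ in the upper bound (using $1/(2K_1)\le K_2$), yielding $\langle V\rangle_R\leq 2K_2 E_0^b$. This is the standard closing argument used throughout Sections \ref{sec:cascades}--\ref{sec:indiecascades}.

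The main obstacle — really the only non-bookkeeping point — is controlling the advective cross term $\int (b\cdot u)(b\cdot\nabla\phi)$: one must distribute the powers of $\phi$ and of $|u|,|b|$ carefully so that after Gagliardo--Nirenberg the gradient that appears is $\|\nabla(b\phi^{1/2})\|_2$ (which, modulo a further lower-order $\phi^{4\rho-3}$ remainder from the product rule, is $\|(\nabla b)\phi^{1/2}\|_2$ plus $(R_0/R)$-corrections), and so that Young's inequality can absorb it into the leading diffusive term with a coefficient small enough to leave the net coefficient of $\langle|\nabla b|^2\phi\rangle$ strictly above $1/2$. Since this estimate has already been carried out in \eqref{ineq:tripleCrossProductTerm} for $F^{ub}$ and the term here is identical, no new difficulty arises; the remaining steps are routine applications of Lemma \ref{lemma:ensembleaverages} and the cut-off bounds.
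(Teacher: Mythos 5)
Your proposal follows essentially the same route as the paper: test the induction equation against $\phi_{x_i,R}\,b$, absorb a fraction of the advective term into the magnetic enstrophy via the H\"older--Gagliardo--Nirenberg--Young estimate already carried out for \eqref{ineq:tripleCrossProductTerm}, control the $\phi_t$ and $\Delta\phi$ terms with the refined cut-off bounds, ensemble-average via Lemma \ref{lemma:ensembleaverages}, and close by choosing $\beta=(2CK_1K_2)^{-1/4}$. The only (immaterial) discrepancies are that the advective remainder actually produced by the induction equation is $\tfrac12\int|b|^2(u\cdot\nabla\phi)$, i.e.\ the term $F^b$ of \eqref{term:bTobEnergy} rather than $F^{ub}$ --- the two obey the identical bound --- and that your $(R_0/R)^4$ error term, which matches the exponent $1/4$ in $\tau$, is the correct power where the paper's displayed intermediate bounds contain a typographical $(R_0/R)^2$.
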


\begin{proof}Starting with the induction equation it is routine to obtain,
\begin{align*}\int_0^1\int (b\cdot \nabla)u\cdot (b \phi_{x_i,R} )~dx~dt&= \int_0^1\int |\nabla b|^2\phi_{x_i,R}~dx~dt -\int_0^1\int \frac 1 2 |b|^2(\partial_t \phi_{x_i,R}-\Delta \phi_{x_i,R})~dx~dt
\\&\qquad + \frac 1 2 \int_0^1\int |b|^2(u\cdot \nabla \phi_{x_i,R})~dx~dt.\end{align*} Note that, for a refined cut-off function $\phi$, the bounding process evident in the derivation of \eqref{ineq:tripleCrossProductTerm} can be modified to yield,
\begin{align*}\int_0^1\int \frac {|b|^2} 2  u\cdot \nabla \phi~dx~dt&\leq C \bigg(\frac {R_0} {R}\bigg)^4 \big(\sup_t ||u||_2 \big)^4\int_0^1 ||b\phi^{}||_2^2~dt+ \frac 1 4 \int_0^1||\nabla (\phi~ b)||_2^2~dt.
\end{align*}Consequently, after taking ensemble averages and recalling the last two estimates in \eqref{ineq:uEnergyWithReynolds},
\begin{align*} \langle V_{x_i,R} \rangle_R&\leq  K_2 E_0^b +K_2\big(\sup_t ||u||_2^4 +2\big)\bigg(\frac {R_0} R \bigg)^2e_0^b
\end{align*}
and,
\begin{align*}\langle V_{x_i,R} \rangle_R&\geq  K_2 E_0^b - \frac C {2K_1} \big(\sup_t ||u||_2^4 +2\big)\bigg(\frac {R_0} R \bigg)^2e_0^b.
\end{align*}Selecting an appropriate value for $\beta$ allows us to conclude in the standard fashion.
\end{proof}

\subsection*{Acknowledgements.}

Z.B. acknowledges the support of the \emph{Virginia Space Grant Consortium} via the Graduate Research Fellowship; Z.G. acknowledges the support of the \emph{Research Council of Norway} via the grant 213474/F20, and the \emph{National Science Foundation} via the grant DMS 1212023.

\bibliographystyle{plain}
\bibliography{references}

\begin{thebibliography}{10}

\bibitem{AlEy10}
H.~Aluie and G.~L. Eyink.
\newblock Scale locality of magnetohydrodynamic turbulence.
\newblock {\em Phys. Rev. Lett.}, 104(8):081101, 2010.

\bibitem{Be12}
A.~Beresnyak.
\newblock Basic properties of magnetohydrodynamic turbulence in the inertial
  range.
\newblock {\em Mon. Not. R. Astron. Soc.}, 422(4):3495, 2012.

\bibitem{BeLa08}
A.~Beresnyak and Lazarian A.
\newblock Strong imbalanced turbulence.
\newblock {\em ApJ}, 682(2):1070, 2010.

\bibitem{DB}
D.~Biskamp.
\newblock {\em Magnetohydrodynamic Turbulence}.
\newblock Cambridge University Press, 2003.

\bibitem{DB-ES-01}
D.~Biskamp and E.~Schwartz.
\newblock On two-dimensional magnetohydrodynamic turbulence.
\newblock {\em Phys. Plasmas}, 8(7):3282, 2001.

\bibitem{B05}
S.~Boldyrev.
\newblock On the spectrum of magnetohydrodynamic turbulence.
\newblock {\em Astrophys. J. Lett.}, 626(1):L37, 2005.

\bibitem{BrGr2}
Z.~Bradshaw and Z.~Gruji\'c.
\newblock On the transport and concentration of enstrophy in 3d
  magnetohydrodynamical turbulence.
\newblock {\em Nonlinearity}, 26:2373, 2013.

\bibitem{Bruno2001}
R.~Bruno, V.~Carbone, P.~Veltri, E.~Pietropaolo, and B.~Bavassano.
\newblock Identifying intermittency events in the solar wind.
\newblock {\em Planetary and Space Science}, 49(12):1201, 2001.

\bibitem{CKN-82}
L.~Caffarelli, R.~Kohn, and L.~Nirenberg.
\newblock Partial regularity of suitable weak solutions of the
  {N}avier-{S}tokes equations.
\newblock {\em Comm. Pure Appl. Math.}, 35(6):771, 1982.

\bibitem{CCFS08}
A.~Cheskidov, P.~Constantin, S.~Friedlander, and R.~Shvydkoy.
\newblock Energy conservation and {O}nsager's conjecture for the {E}uler
  equations.
\newblock {\em Nonlinearity}, 21(6):1233, 2008.

\bibitem{Chorin}
A.~J. Chorin.
\newblock {\em Vorticity and Turbulence}.
\newblock Number v. 103 in Applied Mathematical Sciences. Springer, 1994.

\bibitem{DaGr1}
R.~Dascaliuc and Z.~Gruji{\'c}.
\newblock Energy cascades and flux locality in physical scales of the 3{D}
  {N}avier-{S}tokes equations.
\newblock {\em Comm. Math. Phys.}, 305(1):199, 2011.

\bibitem{DaGr2}
R.~Dascaliuc and Z.~Gruji\'c.
\newblock Anomalous dissipation and energy cascade in 3{D} inviscid flows.
\newblock {\em Comm. Math. Phys.}, 309(3):757, 2012.

\bibitem{DaGr4}
R.~Dascaliuc and Z.~Gruji{\'c}.
\newblock Dissipation anomaly and energy cascade in 3{D} incompressible flows.
\newblock {\em C. R. Math. Acad. Sci. Paris}, 350(3-4):199, 2012.

\bibitem{DaGr5}
R.~Dascaliuc and Z.~Gruji\'c.
\newblock Vortex stretching and criticality for the three-dimensional
  {N}avier-{S}tokes equations.
\newblock {\em J. Math. Phys.}, 53(11):115613, 2012.

\bibitem{DaGr3}
R.~Dascaliuc and Z.~Gruji{\'c}.
\newblock Coherent {V}ortex {S}tructures and 3{D} {E}nstrophy {C}ascade.
\newblock {\em Comm. Math. Phys.}, 317(2):547, 2013.

\bibitem{DVC-05}
O.~Debliquy, M.~Verma, and D.~Carati.
\newblock Energy fluxes and shell-to-shell transfers in three-dimensional
  decaying magnetohydrodynamical turbulence.
\newblock {\em Phys. Plasmas}, 12(4):042309, 2005.

\bibitem{E05}
G.~L. Eyink.
\newblock Locality of turbulent cascades.
\newblock {\em Phys. D}, 207(1–2):91, 2005.

\bibitem{FMRT2001}
C.~Foias, O.~Manley, R.~Rosa, and R.~Temam.
\newblock Estimates for the energy cascade in three-dimensional turbulent
  flows.
\newblock {\em C. R. Acad. Sci. Paris S\'er. I Math.}, 333(5):499, 2001.

\bibitem{Frisch}
U.~Frisch.
\newblock {\em Turbulence}.
\newblock Cambridge University Press, Cambridge, 1995.
\newblock The legacy of A. N. Kolmogorov.

\bibitem{GaPoMa05}
S.~Galtier, .~Pouquet, and A.~Mangeney.
\newblock On spectral scaling laws for incompressible anisotropic
  magnetohydrodynamic turbulence.
\newblock {\em Phys. Plasmas}, 12(9):092310, 2005.

\bibitem{GS94}
P.~Goldreich and S.~Sridhar.
\newblock Toward a theory of interstellar turbulence. i. weak {A}lfv\'enic
  turbulence.
\newblock {\em Astrophys. J.}, 432:612, 1994.

\bibitem{GS95}
P.~Goldreich and S.~Sridhar.
\newblock Toward a theory of interstellar turbulence. ii. strong {A}lfv\'enic
  turbulence.
\newblock {\em Astrophys. J.}, 438:763, 1995.

\bibitem{Greco08}
A.~Greco, P.~Chuychai, W.~H. Matthaeus, W.~Servidio, and P.~Dmitruk.
\newblock Intermittent mhd structures and classical discontinuities.
\newblock {\em Geophys. Res. Lett.}, 35(L):19111, 2008.

\bibitem{Greco09}
A.~Greco, W.~H. Matthaeus, W.~Servidio, P.~Chuychai, and P.~Dmitruk.
\newblock Statistical analysis of discontinuities in solar wind ace data and
  comparison with intermittent mhd turbulence.
\newblock {\em Astrophys. J.}, 691(L):111, 2009.

\bibitem{HeXin1}
C.~He and Z.~Xin.
\newblock Partial regularity of suitable weak solutions to the incompressible
  magnetohydrodynamic equations.
\newblock {\em J. Func. Anal.}, 227(1):113, 2005.

\bibitem{Ir}
P.~Iroshnikov.
\newblock Turbulence of a conducting fluid in a strong magnetic field.
\newblock {\em Sov. Astron.}, 7:566, 1964.

\bibitem{Kr}
R.~H. Kraichnan.
\newblock Inertial range spectrum in hydromagnetic turbulence.
\newblock {\em Phys. Fluids}, 8(7):1385, 1966.

\bibitem{MuGr05}
W.C. M\"uller and R.~Grappin.
\newblock Spectral energy dynamics in magnetohydrodynamic turbulence.
\newblock {\em Phys. Rev. Lett.}, 95:114502, Sep 2005.

\bibitem{PeMaBoCa12}
J.~C. Perez, J.~Mason, S.~Boldyrev, and F.~Cattaneo.
\newblock On the energy spectrum of strong magnetohydrodynamic turbulence.
\newblock {\em Phys. Rev. X}, 2:041005, Oct 2012.

\bibitem{BhaPo10}
J.~J. Podesta and A.~Bhattacharjee.
\newblock Theory of incompressible magnetohydrodynamic turbulence with
  scale-dependent alignment and cross-helicity.
\newblock {\em Astrophys. J.}, 718(2):1151, 2010.

\bibitem{PoBhGa}
D.~Pontin, A.~Bhattacharjee, and K.~Galsgaard.
\newblock Current sheet formation and nonideal behavior at three-dimensional
  magnetic null points.
\newblock {\em Phys. Plasmas}, 14(052106):1, 2007.

\bibitem{SeTe}
M.~Sermange and R.~Temam.
\newblock Some mathematical questions related to the {MHD} equations.
\newblock {\em Comm. Pure Appl. Math.}, 36(5):635, 1983.

\bibitem{VoDaKn05}
A.~Vorobev, O.~Zikanov, P.~Davidson, and B.~Knaepen.
\newblock Anisotropy of magnetohydrodynamic turbulence at low magnetic
  {R}eynolds number.
\newblock {\em Phys. Fluids}, 17(12):125105, 2005.

\end{thebibliography}

\end{document}